\documentclass[sigplan,9pt,acmthm=false]{acmart}
\settopmatter{printfolios=true}
\usepackage[english]{babel}
\usepackage[T1]{fontenc}
\usepackage[utf8]{inputenc}
\usepackage{mathtools}
\usepackage[linesnumbered,vlined]{algorithm2e}
\usepackage{tikz}
\usetikzlibrary{arrows.meta}
\usetikzlibrary{fit}
\usetikzlibrary{shapes.geometric}
\usetikzlibrary{decorations.pathreplacing}
\usetikzlibrary{decorations.pathmorphing}
\usetikzlibrary{calc}
\usetikzlibrary{automata}

\newcommand{\PP}{\mathsf{P}}
\newcommand{\R}{\mathbb{R}}

\newcommand{\N}{\mathbb{N}}
\newcommand{\Cx}{\mathbb{C}}
\newcommand{\Q}{\mathbb{Q}}
\newcommand{\Qbar}{\mathbb{A}}

\newcommand{\K}{\mathbb{K}}

\newcommand{\set}[1]{\left\{#1\right\}}
\newcommand{\IdComp}[1]{{#1}_0}
\DeclareMathOperator{\im}{im}
\DeclareMathOperator{\Gr}{Gr}

\DeclareMathOperator{\rank}{rk}

\DeclareMathOperator{\GL}{GL}

\DeclareMathOperator{\diag}{diag}
\DeclareMathOperator{\Span}{span}

\newcommand{\Zcl}[2][]{\overline{#2}^{#1}}

\newcommand{\Gen}[1]{\left\langle#1\right\rangle}
\newcommand{\ExtAlg}[2][]{\Lambda^{#1}{#2}}

\allowdisplaybreaks

\newcommand{\checkthat}[1]{\textcolor{red}{[check that]#1}}
\newcommand{\ie}{\emph{i.e.}}

\theoremstyle{acmplain}
\newtheorem{theorem}{Theorem}
\newtheorem{proposition}[theorem]{Proposition}
\newtheorem{corollary}[theorem]{Corollary}
\newtheorem{lemma}[theorem]{Lemma}
\theoremstyle{acmdefinition}

\SetAlgoNoEnd
\DontPrintSemicolon
\SetKw{KwFrom}{from}
\SetKw{KwAnd}{and}
\SetKwInOut{Input}{input}
\SetKwInOut{Output}{output}

\bibliographystyle{ACM-Reference-Format}

\begin{document}

\copyrightyear{2018} 
\acmYear{2018} 
\setcopyright{acmcopyright}
\acmConference[LICS '18]{LICS '18: 33rd Annual ACM/IEEE Symposium on Logic in Computer Science}{July 9--12, 2018}{Oxford, United Kingdom}
\acmBooktitle{LICS '18: LICS '18: 33rd Annual ACM/IEEE Symposium on Logic in Computer Science, July 9--12, 2018, Oxford, United Kingdom}
\acmPrice{15.00}
\acmDOI{10.1145/3209108.3209142}
\acmISBN{978-1-4503-5583-4/18/07}

\title{Polynomial Invariants for Affine Programs}
\author{Ehud Hrushovski}
\affiliation{
  \department{Mathematical Institute}
  \institution{Oxford University, UK}
}
\email{ehud.hrushovski@maths.ox.ac.uk}
\author{Jo\"el Ouaknine}
\affiliation{
  \institution{Max Planck Institute for Software Systems}
  \city{Saarland Informatics Campus}
  \country{Germany}
}
\affiliation{
  \department{Department of Computer Science}
  \institution{Oxford University, UK}
}
\email{joel@mpi-sws.org}
\author{Amaury Pouly}
\affiliation{
  \institution{Max Planck Institute for Software Systems}
  \city{Saarland Informatics Campus}
  \country{Germany}
}
\email{pamaury@mpi-sws.org}
\author{James Worrell}
\affiliation{
  \department{Department of Computer Science}
  \institution{Oxford University, UK}
}
\email{jbw@cs.ox.ac.uk}

\begin{abstract}
  We exhibit an algorithm to compute the strongest polynomial (or
  algebraic) invariants that hold at each location of a given affine
  program (i.e., a program having only non-deterministic (as opposed
  to conditional) branching and all of whose assignments are given by
  affine expressions). Our main tool is an algebraic result of
  independent interest: given a finite set of rational square matrices
  of the same dimension, we show how to compute the Zariski closure of
  the semigroup that they generate.
\end{abstract}

\maketitle

\begin{acks}
Jo\"el Ouaknine was supported by ERC grant AVS-ISS
(648701), and James Worrell was supported by EPSRC Fellowship EP/N008197/1.
\end{acks}

\section{Introduction}

\paragraph{Invariants} 
Invariants are one of the most fundamental and useful notions in
the quantitative sciences, appearing in a wide range of contexts, from gauge theory,
dynamical systems, and control theory in physics, mathematics, and
engineering to program verification, static analysis, abstract
interpretation, and programming language semantics (among others) in
computer science. In spite of decades of scientific work and
progress, automated invariant synthesis remains a topic of active
research, particularly in the fields of theorem proving and program
analysis, and plays a central role in methods and tools seeking to establish
correctness properties of computer programs; 
see, e.g.,~\cite{KCBR18}, and particularly Sec.~8 therein.

\paragraph{Affine Programs}
Affine programs are a simple kind of nondeterministic imperative
programs (which may contain arbitrarily nested loops) in which the
only instructions are assignments whose right-hand sides are affine
expressions, such as $x_3:=x_1-3x_2+7$. A conventional imperative
program can be abstracted to an affine program by replacing
conditionals with nondeterminism and conservatively over-approximating
non-affine assignments, see, e.g., \cite{BradleyM07}. In doing so,
affine programs enable one to reason about more complex programs; a
particularly striking example is the application of affine programs to
several problems in inter-procedural
analysis~\cite{BradleyM07,Muller-OlmS04,Muller-OlmS04b,GulwaniN03}.

\paragraph{Affine Invariants}
An affine invariant for an affine program with $n$ variables assigns
to each program location an affine subspace of $\R^n$ such that the
resulting family of subspaces is preserved under the transition
relation of the program.  Such an invariant is specified by giving a
finite set of affine relations at each location.  The strongest (i.e.,
smallest with respect to set inclusion) affine invariant is obtained
by taking the affine hull of the set of reachable configurations
(i.e., values of the program variables) at each program location.
Equivalently, the strongest affine invariant is determined by giving,
for each program location, the set of all affine relations holding at
that location.

An algorithm due to Michael Karr in 1976~\cite{Karr76} computes the
strongest affine invariant of an affine program.  
A more efficient reformulation of Karr's
algorithm was given by M\"{u}ller-Olm and Seidl~\cite{Muller-OlmS04},
who moreover showed that if the class of affine programs is augmented
with equality guards then it becomes undecidable whether or not a
given affine relation holds at a particular program location.  A
randomised algorithm for discovering affine relations was proposed by
Gulwani and Necula~\cite{GulwaniN03}.

\paragraph{Polynomial Invariants}
A natural and more expressive generalisation of affine invariants are
polynomial invariants.  A polynomial invariant assigns to each program
location a variety (or algebraic set, i.e., positive Boolean
combination of polynomial equalities) such that the resulting family
is preserved under the transition relation of the program.  A
polynomial invariant is specified by giving a set of polynomial
relations that hold at each program location.  The strongest
polynomial invariant (i.e., smallest variety with respect to set
inclusion) is obtained by taking the Zariski
closure of the set of reachable configurations in each location.

The problem of computing polynomial invariants for affine programs and
related formalisms has been extensively studied over the past fifteen
years years; see,
e.g.,~\cite{CacheraJJK14,Colon07,SankaranarayananSM04,Kapur13,Rodriguez-CarbonellK07,Rodriguez-CarbonellK07b,
  OliveiraBP16,KovacsJ05,Kovacs08,HumenbergerJK18,KCBR18,OBP16}.
However, in contrast to the case of affine invariants, as of yet no
method is known to compute the \emph{strongest} polynomial invariant, i.e.,
(a basis for) the set of all polynomial relations holding at each
location of a given affine program.  Existing methods are either
heuristic in nature, or only known to be complete relative to
restricted classes of invariants or programs.  For example, it is
shown in~\cite{Muller-OlmS04} (see
also~\cite{Rodriguez-CarbonellK07b}) that Karr's algorithm can be
applied to compute the smallest polynomial invariant that is specified
by polynomial relations of a fixed degree $d$.  (The case of affine
invariants corresponds to $d=1$.)  The paper~\cite{OliveiraBP16} gives
a method that finds all polynomial invariants for a highly restricted
class of affine programs (in which all linear mappings have positive
rational eigenvalues).  The approach
of~\cite{Kovacs08,HumenbergerJK18} via so-called P-solvable loops does
not encompass the whole class of affine programs either (although it does
allow to handle certain classes of programs with polynomial
assignments)~\cite{Kovacspersonal}.

\paragraph{Main Contribution}
In this paper we give a method to compute the set of \emph{all} polynomial
relations that hold at a given location of an affine program, or in
other words the strongest polynomial invariant.  The
output of the algorithm gives for each program location a finite basis
of the ideal of all polynomial relations holding at that location.  

Our main tool is an algebraic result of independent interest: we give
an algorithm that, given a finite set of rational square matrices of
the same dimension, computes the Zariski closure of the semigroup that
they generate.  Our algorithm generalises (and uses as a subroutine)
an algorithm of Derksen, Jeandel, and Koiran~\cite{DerksenJK05} to
compute the Zariski closure of a finitely generated group of
invertible matrices.%
\footnote{Related to this, Corollary~3.7 and Lemma~3.6a in~\cite{HR} reduce the
question of computing the Zariski closure of a finitely generated
group of invertible matrices to that of finding multiplicative relations among
diagonal matrices. Note that if one begins with rational matrices,
then such relations can be found simply using prime decomposition of
the entries.}

Our procedure for computing the Zariski closure of a matrix semigroup
also generalises a result of Mandel and Simon~\cite{MandelS77} and,
independently, of Jacob~\cite{Jacob77,Jacob78}, to the effect that it is decidable
whether a finitely generated semigroup of rational matrices is finite.
Note that a variety that is given as the zero set of a polynomial
ideal $I\subseteq \K[x_1,\ldots,x_n]$ is finite just in case the
quotient $\K[x_1,\ldots,x_n]/I$ is finite-dimensional as a vector
space over $\K$~\cite[Chapter~5, Sec.~3]{CoxLS}).  The latter
condition can be checked by computing a Gr\"{o}bner basis for $I$.

As mentioned above, we make use of the result of~\cite{DerksenJK05}
that one can compute the Zariski closure of the group generated by a
finite set of invertible matrices.  That result itself relies on several
non-trivial mathematical ingredients, including results of
Masser~\cite{Mas88} on computing multiplicative relations among given
algebraic numbers and Schur's theorem that every finitely generated
periodic subgroup of the general linear group $\GL_n(\Cx)$ is finite.  

Given a set of matrices $A\subseteq M_n(\Cx)$, we leverage these
group-theoretic results to compute the Zariski closure $\Zcl{\Gen{A}}$
of the generated semigroup $\Gen{A}$.  To this end we use multilinear
algebra as well as structural properties of matrix semigroups to identify
finitely many subsemigroups of $\Zcl{\Gen{A}}$ that can be used to
generate the entire semigroup.  Pursuing this approach requires that we
first generalise the result of~\cite{DerksenJK05} to show that one can
compute the Zariski closure of the group generated by a constructible
(as opposed to finite) set of invertible matrices. 

It is worth pointing out that whether a particular configuration is
reachable at a certain program location of a given affine program is
in general an undecidable problem---this follows quite straightforwardly
from the undecidability of the membership problem for finitely
generated matrix semigroups, discussed shortly. It is therefore
somewhat remarkable that the Zariski closure (i.e., the smallest
algebraic superset) of the set of reachable configurations at any
particular location nevertheless turns out to be a computable object.

\paragraph{Matrix Semigroups and Automata}
Decision problems for matrix semigroups have also been studied for
many decades, independently of program analysis. One of the most
prominent such is the Membership Problem, i.e., whether a given matrix
belongs to a finitely generated semigroup of integer matrices. An
early and striking result on this topic is due to Markov, who showed
undecidability of the Membership Problem in dimension $6$ in
1947~\cite{Mar47}. Later Paterson~\cite{Pat70} improved this result to
show undecidability in dimension $3$, while decidability
in dimension $2$ remains open.
A breakthrough was achieved in 2017 by
Potapov and Semukhin, who showed decidability of membership for
semigroups generated by \emph{nonsingular} integer $2 \times 2$
matrices~\cite{PS17}. By contrast, the Membership Problem was shown to
be polynomial-time decidable in any dimension by Babai \emph{et al.}\
for \emph{commuting} matrices over algebraic
numbers~\cite{BBCIL96}. As aptly noted by Stillwell, ``noncommutative
semigroups are hard to understand''~\cite{Sti16}. Matrix semigroup
theory also plays a central role in the analysis of weighted automata
(such as probabilistic and quantum automata, see, e.g.,
\cite{BJKP05,DerksenJK05}).

\paragraph{Abstract Interpretation and Other Approaches}
Polynomial\linebreak invariants are stronger (i.e., more precise) than affine
invariants.  Various other types of domains have been considered in
the setting of abstract interpretation, e.g., intervals, octagonal
sets, and polyhedra (see, e.g.,~\cite{CC77,CH78,Min01} and references
in~\cite{BradleyM07}). The precision of such domains in general is
incomparable to that of polynomial invariants. 

The computation of semialgebraic and o-minimal invariants has also been considered in
the context of discrete-time linear dynamical systems and linear loops (which can be
viewed as highly restricted instances of affine programs); see,
e.g.,~\cite{FOOPW17,ACOW18}. 

\section{Two Illustrative Examples}

We now present two simple examples to illustrate some of the ideas and
concepts that are discussed in this paper. Some of the notation and
terminology that we use is only introduced in later sections; should this
impede understanding, we recommend that the reader return to these examples
after having read Sections~\ref{sec:misc} and \ref{sec:affine-programs}.

As a first motivating example, consider the following linear loop:
\begin{align*}
& x := 3; \\ 
& y := 2; \\
&  \mbox{\textbf{while}}\ 2y-x \geq -2\ \mbox{\textbf{do}} \\
& \quad \begin{pmatrix} x\\ y\end{pmatrix} := 
\begin{pmatrix} 10& -8 \\ 6 &  -4 \end{pmatrix} 
\begin{pmatrix} x\\ y\end{pmatrix} ;
\end{align*}

This loop never halts, although this fact is perhaps not immediately
obvious. Here we show how the techniques developed in this paper can help
establish non-termination. To this end, we first turn our code
into an affine program consisting of two locations, as follows:
\begin{center}
\begin{tikzpicture}[->, >=stealth,node distance=2cm, every state/.style={thick, fill=gray!10},initial text=$ $]
\node[state,initial] (q1) {$q_1$};
\node[state, right of=q1] (q2) {$q_2$};
\draw
(q1) edge[above] node{$f_1$} (q2)
(q2) edge[loop above] node{$f_2$} (q2);
\end{tikzpicture}
\end{center}
Here $f_1$ is the constant affine function assigning $3$ to $x$ and $2$ to $y$,
whereas $f_2$ is the linear transformation associated with the matrix
appearing in our while loop. Note that we have discarded the loop
guard.

The collecting semantics of this affine program assigns to location $q_2$ the
set $S_{q_2} \subseteq \mathbb{Z}^2$ of all values taken by the pair
of variables $(x,y)$ in the unending execution of the program. As it
turns out, the real Zariski closure $\Zcl[\R]{S_{q_2}}$ of $S_{q_2}$ consists of the set 
\[ 
\{ (x,y) \in \mathbb{R}^2 : x - 9x^2 -y + 24xy -16y^2 = 0 \} \, . 
\]
By construction, this polynomial invariant is stable under $f_2$
and over-approximates the set $S_{q_2}$ of reachable $(x,y)$-configurations.
Verifying that all tuples in this variety moreover satisfy the guard $2y-x \geq -2$
is now a simple exercise in high-school algebra, from which one
concludes that our original loop will indeed never terminate.

For our second example, define $G := \Zcl[\R]{\Gen{S,T,E}}$ to be the
matrix semigroup obtained as the  
real Zariski closure of the semigroup generated by $S,T$ and
$E$, where
\[ S:=\begin{pmatrix} 0&-1\\1 & 0 \end{pmatrix} \, ,
\quad 
  T := \begin{pmatrix} 1&1\\0 & 1 \end{pmatrix} \, ,
\quad
  E := \begin{pmatrix} 1&0\\0 & 0 \end{pmatrix} \, .
\]
We show that
$G=\{ M \in M_2(\mathbb{R}) : \det(M)=1 \mbox{ or } \det(M)= 0
\}$ and in the process illustrate (in a very simple setting) the
approach of computing the Zariski closure of a matrix semigroup by
order of decreasing rank.  This approach underlies the algorithm
described in Sec.~\ref{sec:main-algorithm}.

Consider first $G':=\{ M \in G : \rank(M)=2 \}$.  From the fact that
the set of singular matrices in $M_2(\mathbb{R})$ is Zariski closed,
one can show that $G'=\{M \in \Zcl[\R]{\Gen{S,T}} : \rank(M)=2\}$.
Now it is well known that $S$ and $T$ generate the semigroup
$\mathrm{SL}_2(\mathbb{Z})$ of $2\times 2$ integer matrices of
determinant $1$ and that the real Zariski closure of
$\mathrm{SL}_2(\mathbb{Z})$ is the semigroup
$\mathrm{SL}_2(\mathbb{R})$ of $2\times 2$ \emph{real} matrices of
determinant $1$\footnote{The latter fact follows from the Borel
  density theorem~\cite[Sec. 4.5 and Sec. 7.0]{Morris01}, but can also be
  established directly by an elementary argument.}; hence
$G'= \mathrm{SL}_2(\mathbb{R})$.  More generally, we can use the
algorithm of Derksen, Jeandel, and Koiran~\cite{DerksenJK05} to
compute the Zariski closure of any finitely generated semigroup of
invertible matrices.

Now we consider the sub-semigroup $G''$ of singular matrices in $G$.
This is the real Zariski closure of the semigroup generated by the
(constructible) set of matrices
\[ \{ MEM', ME, EM : M,M' \in \mathrm{SL}_2(\mathbb{R}) \} \, . \] It is
straightforward to observe that this generating set already includes
all rank-$1$ matrices in $M_2(\mathbb{R})$ and hence that the
generated semigroup contains all singular matrices.  We conclude that
$G=G'\cup G''$ comprises all matrices in $M_2(\mathbb{R})$ of
determinant $0$ or $1$.
\section{Mathematical Background}
\label{sec:misc}
\subsection{Linear Algebra}
\hspace{\parindent}\textbf{Matrices.}  Let $\K$ be a field.  We denote
by $M_n(\K)$ the semigroup of square matrices of dimension $n$ with
entries in $\K$.  We write $\GL_n(\K)$ for the subgroup of $M_n(\K)$
comprising all invertible matrices.  Given a set of matrices
$A \subseteq M_n(\K)$, we denote by $\Gen{A}$ the sub-semigroup of
$M_n(\K)$ generated by $A$.  The rank of a matrix $A$ is denoted by
$\rank(A)$, its kernel by $\ker(A)$, and its image by $\im(A)$.

\textbf{Exterior Algebra and the Grassmannian.}  Given a vector\linebreak
 space
$V$ over the field $\K$, its exterior algebra $\ExtAlg{V}$ is a vector
space that embeds $V$ and is equipped with an associative, bilinear,
and anti-symmetric map
\[\wedge : \ExtAlg{V} \times \ExtAlg{V} \rightarrow \ExtAlg{V} \, . \]
We can construct $\ExtAlg{V}$ as a direct sum
\[ \ExtAlg{V} = \ExtAlg[0]{V} \oplus \ExtAlg[1]{V} \oplus
  \ExtAlg[2]{V} \cdots \, , \] where $\ExtAlg[r]{V}$ denotes the
$r^{\textit{th}}$-exterior power of $V$ for $r\in \mathbb{N}$, that
is, the subspace of $\ExtAlg{V}$ generated by $r$-fold wedge products
$v_1 \wedge \ldots \wedge v_r$ for $v_1,\ldots,v_r \in V$.  If $V$ is
finite dimensional, with basis $e_1,\ldots,e_n$, then a basis of
$\ExtAlg[r]{V}$ is given by $e_{i_1} \wedge\cdots \wedge e_{i_r}$,
$1\leq i_1 < \ldots < i_r \leq n$.  Thus $\ExtAlg[r]{V}$ has dimension
$\binom{n}{r}$ (where $\binom{n}{r}=0$ for $r>n$).

A basic property of the wedge product is that given vectors
$u_1,\ldots,u_r \in V$, $u_1 \wedge \ldots \wedge u_r \neq 0$ if and
only if $\{u_1,\ldots,u_r\}$ is a linearly independent set.
Furthermore given $w_1,\ldots,w_r \in V$ we have that
$u_1\wedge \ldots \wedge u_r$ and $w_1\wedge \ldots \wedge w_r$ are
scalar multiples of each other iff
$\Span(u_1,\ldots,u_r)= \Span(v_1,\ldots,v_r)$.  


The \emph{Grassmannian} $\Gr(r,n)$ is the set of $r$-dimensional
subspaces of $\Cx^n$.  By the above-stated properties of the wedge
product there is an injective function
\[\iota : \Gr(r,n) \rightarrow\ExtAlg[r](\Cx^n)\]
such that for any $W$, $\iota(W)=v_1\wedge\cdots\wedge v_r$ where
$v1,\ldots,v_r$ is an arbitrarily chosen basis of $W$. Note that given
two basis $v_1,\ldots,v_r$ and $u_1,\ldots,u_r$ of $W$, there exists
$\alpha\in\Cx$ such that
$v_1\wedge\cdots v_r=\alpha(u_1\wedge\cdots u_r)$. In other words, the
particular choice of a basis for $W$ only changes the value of
$\iota(W)$ up to constant. Given subspaces $W_1,W_2 \subseteq V$ we
moreover have $W_1\cap W_2 = 0$ iff
$\iota(W_1) \wedge \iota(W_2) \neq 0$.

\subsection{Algebraic Geometry}
\label{sub:alg-geo}
In this section we summarise some basic notions of algebraic geometry
that will be used in the rest of the paper.  

Let $\K$ be a field.  An \emph{affine variety} $X\subseteq\K^n$ is the
set of common zeros of a finite collection of polynomials, \ie{}, a
set of the form
\[X=\set{x\in\K^n:p_1(x)=p_2(x)=\cdots=p_\ell(x)=0}\, ,\] where
$p_1,\ldots,p_\ell \in \K[x_1,\ldots,x_n]$.  Given a polynomial ideal
$I\subseteq \K[x_1,\ldots,x_n]$, by Hilbert's basis theorem the set
\[\mathbf{V}(I) = \{x\in \K^n : \forall p \in I,\,p(x)=0 \}\] is a
variety, called the \emph{variety of $I$}.  The two main varieties of
interest to us are $X=M_n(\K)$, which we identify with affine space
$\K^{n^2}$ in the natural way, and $X=\GL_n(\K)$, which we identify
with the variety
\[ \{(A,y) \in \K^{n^2+1} : \det(A) \cdot y = 1\} \, . \]

Given an affine variety $X\subseteq\K^n$, the \emph{Zariski topology}
on $X$ has as closed sets the subvarieties of $X$, i.e., those sets
$A\subseteq X$ that are themselves affine varieties in $\K^n$.  For
example, $\{ a \in M_n(\K) : \rank(a) < r\}$ is a Zariski closed
subset of $M_n(\K)$, since for $a\in M_n(\K)$ we have $\rank(a)<r$ iff
all $r\times r$ minors of $a$ vanish.  Given an arbitrary set
$S\subseteq X$, we write $\Zcl{S}$ for its closure in the Zariski
topology on $X$.

It is straightforward that if $X \subseteq \Cx^n$ is a complex variety
then $X\cap \R^n$ is a real variety.  It follows that the Zariski
topology on $M_n(\R)$ coincides with the subspace topology induced on
$M_n(\R)$ by the Zariski topology $M_n(\Cx)$.  In particular, we can
compute the Zariski closure of a set of matrices $A \subseteq M_n(\R)$
by first computing the Zariski closure of $A$ in the complex variety
$M_n(\Cx)$ and then intersecting with $M_n(\R)$.

A set $S \subseteq X$ is \emph{irreducible} if
for all closed subsets $A_1,A_2\subseteq X$ such that
$S\subseteq A_1 \cup A_2$ we have either $S\subseteq A_1$ or
$S\subseteq A_2$.  It is well known that the Zariski topology on a
variety is Noetherian.  In particular, any closed subset $A$ of $X$
can be written as a finite union of \emph{irreducible components},
where an irreducible component of $A$ is a maximal irreducible closed
subset of $A$.

The class of \emph{constructible} subsets of a variety $X$ is obtained
by taking all Boolean combinations (including complementation) of
Zariski closed subsets.  Suppose that the underlying field $\K$ is
algebraically closed.  Since the first-order theory of algebraically
closed fields admits quantifier elimination, the constructible subsets
of $X$ are exactly the subsets of $X$ that are first-order definable
over $\K$.

Suppose that $X\subseteq\K^m$ and $Y\subseteq\K^n$ are affine
varieties.  A function $\varphi : X \rightarrow Y$ is called a
\emph{regular map} if it arises as the restriction of a polynomial map
$\K^m\rightarrow \K^n$.  Chevalley's Theorem states that if $\K$
algebraically closed and $\varphi : X \rightarrow Y$ is a regular map then
the image $\varphi(A)$ of a constructible set $A\subseteq X$ under
$\varphi$ is a constructible subset of $Y$.  This result also follows
from the fact that the theory of algebraically closed fields admits
quantifier elimination.

A regular map of interest to us is matrix multiplication
$M_n(\K) \times M_n(\K) \rightarrow M_n(\K)$.  In particular, we have
that for constructible sets of matrices $A,B \subseteq M_n(\K)$ the
set of products \[ A \cdot B := \{ ab : a \in A, b \in B \}\] is again
constructible.  Notice also that matrix inversion is a regular map
$\GL_n(\K) \rightarrow \GL_n(\K)$.  Thus if $A\subseteq \GL_n(\K)$ is
a constructible set then so is $A^{-1} := \{ a^{-1} : a \in A\}$.
Finally, the projection $(A,y) \mapsto A$ yields an injective regular
map $\GL_n(\K) \rightarrow M_n(\K)$.  Via this map we can identify
$\GL_n(K)$ with a constructible subset of $M_n(\K)$.

On several occasions we will use the facts that regular maps are
continuous with respect to the Zariski topology and that the image of
an irreducible set under a regular map is again irreducible.  In
particular, we have:
\begin{lemma}\label{lem:prod_irred_irred}
  If $X,Y\subseteq \GL_n(\Cx)$ are irreducible closed sets then
  $\Zcl{X\cdot Y}$ is also irreducible.
\end{lemma}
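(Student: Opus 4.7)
The plan is to reduce the claim to three standard facts about the Zariski topology: (i) the product of two irreducible varieties over an algebraically closed field is irreducible; (ii) the image of an irreducible set under a continuous map is irreducible; (iii) the Zariski closure of an irreducible set is irreducible. The paper has already noted that regular maps are continuous in the Zariski topology and that matrix multiplication is a regular map, so all the ingredients are essentially on the table.

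Concretely, I would argue as follows. First, consider the product $X \times Y \subseteq \GL_n(\Cx) \times \GL_n(\Cx)$. Since $\Cx$ is algebraically closed and $X, Y$ are irreducible, $X \times Y$ is irreducible; this is a standard fact which one can prove by assuming a decomposition $X \times Y = C_1 \cup C_2$ into proper closed subsets, slicing by fibers $\{x\} \times Y$ (each of which is irreducible and hence lies in $C_1$ or $C_2$), partitioning $X$ into the two resulting closed pieces, and using irreducibility of $X$ to derive a contradiction.

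Next, matrix multiplication $\mu : \GL_n(\Cx) \times \GL_n(\Cx) \to \GL_n(\Cx)$ is a regular map (its components are polynomials in the matrix entries, and the inverse of a product is a polynomial in the entries and in $1/\det$, both of which are well-defined on $\GL_n(\Cx)$). Regular maps are continuous in the Zariski topology, and a continuous image of an irreducible set is irreducible: if $\mu(X \times Y) \subseteq A_1 \cup A_2$ with $A_1, A_2$ closed in $\GL_n(\Cx)$, then $X \times Y \subseteq \mu^{-1}(A_1) \cup \mu^{-1}(A_2)$, so by irreducibility $X \times Y$ lies in one of the preimages and hence $\mu(X \times Y) = X \cdot Y$ lies in the corresponding $A_i$.

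Finally, the Zariski closure of an irreducible set is irreducible: if $\Zcl{X \cdot Y} \subseteq A_1 \cup A_2$ with the $A_i$ closed, then $X \cdot Y \subseteq A_1 \cup A_2$, so by the previous step $X \cdot Y \subseteq A_i$ for some $i$, whence $\Zcl{X \cdot Y} \subseteq A_i$. This yields the lemma. There is no real obstacle here; the only point that deserves a moment of care is that irreducibility of the product $X \times Y$ genuinely uses that the ground field is algebraically closed, which is why the lemma is stated over $\Cx$.
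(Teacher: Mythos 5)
Your proof is correct and follows essentially the same route the paper intends: the paper simply records (just before the lemma) that regular maps are Zariski-continuous and that the continuous image of an irreducible set is irreducible, and leaves the reader to combine these with irreducibility of $X\times Y$ and stability of irreducibility under closure. You have filled in exactly those implicit steps, including the standard fiber-slicing argument for irreducibility of the product over an algebraically closed field, so there is nothing to add or correct.
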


\subsection{Algorithmic Manipulation of Constructible Sets} 
\label{algorithmic_manipulation}
In this subsection we briefly recall some algorithmic constructions on
constructible subsets of a variety.  Here, and in the rest of the
paper unless noted otherwise, we assume that the underlying field is
$\Cx$ and that all ideals are generated by polynomials with algebraic
coefficients.

\textbf{Representing Constructible Sets.}  Consider a variety
$X\subseteq \Cx^n$ and let $I \subseteq \Cx[x_1,\ldots,x_n]$ be the
ideal of polynomials that vanish on $X$.  We represent Zariski closed
subsets of $X$ as zero sets of ideals in the coordinate ring
$\Cx[X]=\Cx[x_1,\ldots,x_n]/I$ of $X$.  
The coordinate ring of $M_n(\Cx)$
is just $\Cx[x_{1,1},\ldots,x_{n,n}]$ while the coordinate ring of
$\GL_n(\Cx)$ is
\[ \Cx[x_{1,1},\ldots,x_{n,n},y] / (\mathrm{det}(x_{i,j})y-1) \, . \]

Unions and intersections of Zariski closed subsets of $X$ respectively
correspond to products and sums of the corresponding ideals in
$\Cx[X]$.  
We furthermore represent constructible
subsets of $X$ as Boolean expressions over Zariski closed subsets.

\textbf{Irreducible Components.}  Let $A \subseteq X$ denote a Zariski
closed set that is given as the variety of an $I\subseteq\Cx[X]$.  If
$I = P_1 \cap \cdots \cap P_m$ is an irredundant decomposition of $I$
into primary ideals, then
$A=\mathbf{V}(P_1)\cup \ldots \cup \mathbf{V}(P_m)$ is a decomposition
of $A$ into irreducible components.  One can compute the primary
decomposition of an ideal using Gr\"{o}bner basis
techniques~\cite[Chapter 8]{BeckerW93}.

\textbf{Zariski Closure.}  At several points in our development, we will need to
compute the Zariski closure of a constructible subset of a variety.
Now an arbitrary constructible subset of a variety $X$ can be written
as a union of differences of closed subsets of $X$.  Thus it suffices
to be able to compute the closure of $A\setminus B$ for closed sets
$A,B\subseteq X$.  Furthermore, by first computing a decomposition of
$A$ as a union of irreducible closed sets, we may also assume that $A$
is irreducible.  But $A\subseteq \Zcl{A \setminus B}\cup (A\cap B)$;
thus by irreducibility of $A$ we have $\Zcl{A\setminus B} = \emptyset$
if $A\subseteq B$ and otherwise $\Zcl{A\setminus B} = A$.  An
algorithm (when using the representation above)
for computing the Zariski closure of a
constructible set, essentially following this recipe, is given
in~\cite[Theorem 1]{Koiran00}.

\textbf{Images under Regular Maps.}  One can use an algorithm for
quantifier elimination for the theory of algebraically closed fields
in order to compute the image of a constructible set under a regular
map.  An explicit algorithm for this task, using Gr\"{o}bner bases, is
given in~\cite[Sec.~4]{Schau07}.  

\textbf{Real Zariski Closure.} Given a complex variety
$V\subseteq\Cx^n$, the intersection $V\cap\R^n$, which is a real
variety, can be computed effectively. Indeed if $V$ is represented by
the ideal $I$ then $V\cap\R^n$ is represented by the ideal generated
by $\set{p_\R,p_{i\R}:p\in I}$ where $p_\R$ and $p_{i\R}$ respectively
denote the real and imaginary parts of the polynomial $p$. It is also
straightforward to verify that for a set $S \subseteq \R^n$, we have
$\Zcl[\R]{S} = \Zcl[\Cx]{S} \cap \R^n$.

\section{Algebraic Invariants for Affine Programs}
\label{sec:affine-programs}
An affine function $f:\Q^n\rightarrow \Q^n$ is a function of the form
$f(\boldsymbol{x})=A\boldsymbol{x}+\boldsymbol{b}$, where
$A \in M_n(\Q)$ and $\boldsymbol{b}\in\Q^n$.  We write
$\mathrm{Aff}_n(\Q)$ for the set of affine functions on $\Q^n$.

An \emph{affine program} of dimension $n$ is a tuple
$\mathcal{A}=(Q,E,q_{\mathrm{init}})$, where $Q$ is a finite set of
\emph{program locations},
$E \subseteq Q \times \mathrm{Aff}_n(\Q) \times Q$ is a finite set of
\emph{edges}, and $q_{\mathrm{init}}\in Q$ is the \emph{initial
  location}.

The \emph{collecting semantics} of an affine program $\mathcal{A}$
assigns to each location $q$ the set $S_q\subseteq\Q^n$ of all those
vectors that occur at location~$q$ in some execution of the program.
The family $\{S_q:q\in Q\}$ can be characterised as the least solution
of the following system of inclusions (see~\cite{Muller-OlmS04}):
\begin{eqnarray*}
X_{q_{\mathrm{init}}} & \supseteq & \{\boldsymbol{0}\} \\
X_q & \supseteq & f(X_p) \quad \mbox{for all $(p,f,q) \in E$} \, .
\end{eqnarray*}

Given $P\in \R[x_1,\ldots,x_n]$ we say that the polynomial relation
$P=0$ holds at a program location $q$ if $P$ vanishes on $S_q$.  We
are interested in the problem of computing for each location $q\in Q$
a finite set of polynomials that generate the ideal $I_q :=
\mathbf{I}(S_q) \subseteq \R[x_1,\ldots,x_n]$ of all polynomial
relations holding at location~$q$.  The variety corresponding to ideal
$I_q$ is $V_q:=\mathbf{V}(I_q)=\Zcl[\R]{S_q} \subseteq \R^n$, i.e.,
$V_q$ is the Zariski closure of $S_q$ regarded as a subset of real
affine space.  Thus the problem of computing the family of ideals
$I_q$ is equivalent to the problem of computing the (family of ideals
representing the) Zariski closure of the collecting semantics.

The indexed collection of varieties $\{ V_q : q \in Q \}$ defines
an \emph{invariant} in that for every edge $(p,f,q) \in E$ we have
$f(V_p) \subseteq V_q$.  This follows from the facts that
$f(S_p)\subseteq S_q$ and that $f$ is Zariski continuous.  By
construction we have that $\{ V_q : q \in V \}$ is the smallest
algebraic invariant of the program $\mathcal{A}$ such that
$\boldsymbol{0} \in V_{q_{\mathrm{init}}}$.

In the remainder of this section we reduce the problem of computing
the Zariski closure of the collecting semantics of an affine program
to that of computing the Zariski closure of a related semigroup of
matrices.  The idea of this reduction is first to replace each affine
assignment by a corresponding linear assignment by adding an extra
dimension to the program.  One then simulates a general affine program
by a program with a single location.

Consider an affine program $\mathcal{A}=(Q,E,q_{\mathrm{init}})$,
where the set of locations is $Q=\{q_1,\ldots,q_m\}$ and
$q_{\mathrm{init}}=q_1$.  For each edge $e=(q_j,f,q_i)$ we define a
square matrix $M^{(e)} \in M_{m(n+1)}(\Q)$ comprising an $m\times m$
array of blocks, with each block a matrix in $M_{n+1}(\Q)$.  If the
affine map $f$ is given by
$f(\boldsymbol{x})=A\boldsymbol{x}+\boldsymbol{b}$ then the $(i,j)$-th
block of $M^{(e)}$ is
\[ \begin{pmatrix} A & \boldsymbol{b} \\ 0 & 1 \end{pmatrix} \, , \] while all
other blocks are zero.  Notice that for $\boldsymbol{x} \in \Q^n$
we have
\begin{gather} \begin{pmatrix} A & \boldsymbol{b} \\ 0 & 1 \end{pmatrix} \begin{pmatrix}
  \boldsymbol{x}\\ 1 \end{pmatrix} = \begin{pmatrix}A\boldsymbol{x}+\boldsymbol{b}
  \\ 1 \end{pmatrix} = \begin{pmatrix} f(\boldsymbol{x}) \\ 1 \end{pmatrix}\, .
\label{eq:affine}
\end{gather}

Given $i\in\{1,\ldots,m\}$, define the projection
$\Pi_i : \Cx^{m(n+1)}\rightarrow \Cx^{n+1}$ by
$\Pi_i(\boldsymbol{x}_1,\ldots,\boldsymbol{x}_m) = \boldsymbol{x}_i$
and define the injection
$\mathrm{in}_i:\Cx^n \rightarrow \Cx^{m(n+1)}$ by
\[ \mathrm{in}_i(\boldsymbol{x}) = 
(\boldsymbol{0},\ldots,(\boldsymbol{x},1),\ldots,\boldsymbol{0}) \in \Cx^{m(n+1)}\, ,\]
where $(\boldsymbol{x},1)$ occurs in the $i$-th block.  We denote
$\mathrm{in}_1(\boldsymbol{0})$ by $\boldsymbol{v}_{\mathrm{init}}$.

\begin{proposition}
  Let $\mathcal{M}$ be the semigroup generated
  by the set of matrices $\{ M^{(e)}:e\in E\}$.  Then for
  $i=1,\ldots,m$ we have 
\[ S_{q_i} = \left\{\boldsymbol{x} \in \Q^n : \mathrm{in}_i(\boldsymbol{x})
\in \{M\boldsymbol{v}_{\mathrm{init}}:M\in\mathcal{M}\} \right\} \, .\]
\label{prop:comp-inv}
\end{proposition}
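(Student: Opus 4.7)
The plan is to establish a direct correspondence between finite executions of the program $\mathcal{A}$ and products of generators in $\mathcal{M}$ acting on $\boldsymbol{v}_{\mathrm{init}}$, by exploiting the block structure deliberately built into the matrices $M^{(e)}$. Each $M^{(e)}$ for $e = (q_j, f, q_i) \in E$ has all its $(n+1)\times(n+1)$ blocks zero except the $(i,j)$-block, which by equation~(\ref{eq:affine}) sends a padded vector $(\boldsymbol{x},1)$ to $(f(\boldsymbol{x}),1)$. Since $\mathrm{in}_j$ places a single such padded block in position $j$, the matrix $M^{(e)}$ acts on $\mathrm{in}_j(\boldsymbol{x})$ by extracting block $j$, transforming its first $n$ coordinates by $f$, and depositing the result in block $i$---\emph{provided} $j$ matches the source of $e$; otherwise the product is the zero vector.

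The technical core is the identity, proved by a straightforward induction on $k \geq 0$: for any sequence of edges $e_1, \ldots, e_k$ with $e_\ell = (p_\ell, f_\ell, r_\ell)$ and any $\boldsymbol{x} \in \Q^n$,
\[
M^{(e_k)} \cdots M^{(e_1)} \mathrm{in}_{p_1}(\boldsymbol{x}) = \begin{cases} \mathrm{in}_{r_k}\bigl(f_k \circ \cdots \circ f_1(\boldsymbol{x})\bigr) & \text{if } r_\ell = p_{\ell+1} \text{ for } 1 \leq \ell < k, \\ \boldsymbol{0} & \text{otherwise.} \end{cases}
\]
The inductive step combines equation~(\ref{eq:affine}) with the block observation from the first paragraph; if a mismatch occurs earlier the intermediate result already vanishes and subsequent multiplications keep it zero. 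Thus composing generators precisely simulates traversal of program edges, and any non-matching composition yields $\boldsymbol{0}$, which is not in the image of any $\mathrm{in}_i$.

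Both inclusions follow by specialising this identity to $\mathrm{in}_1(\boldsymbol{0}) = \boldsymbol{v}_{\mathrm{init}}$. For $\subseteq$, each $\boldsymbol{x} \in S_{q_i}$ arises from an execution traversing some edge sequence $e_1, \ldots, e_k$ from $q_1$ to $q_i$ with starting value $\boldsymbol{0}$; the identity then gives $M^{(e_k)} \cdots M^{(e_1)} \boldsymbol{v}_{\mathrm{init}} = \mathrm{in}_i(\boldsymbol{x})$. Conversely, if $M \boldsymbol{v}_{\mathrm{init}} = \mathrm{in}_i(\boldsymbol{x})$ for some $M \in \mathcal{M}$, then writing $M$ as a product of generators and invoking the identity forces the edges to chain into a valid path from $q_1$ to $q_i$ (else the right-hand side would be $\boldsymbol{0}$), and $\boldsymbol{x}$ is exactly the value produced by that execution, hence $\boldsymbol{x} \in S_{q_i}$. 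The only real subtlety is the base case $i = 1$, $\boldsymbol{x} = \boldsymbol{0}$, corresponding to the empty execution: this forces us to interpret $\mathcal{M}$ as containing the identity, either via the empty-product convention or by taking the monoid generated by $\{M^{(e)} : e \in E\}$. Beyond this convention, the argument is routine bookkeeping on block indices.
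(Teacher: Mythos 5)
Your proof follows essentially the same route as the paper's: establish, by direct computation on the block structure, that a product of generators applied to $\boldsymbol{v}_{\mathrm{init}}$ yields $\mathrm{in}_{j_\ell}(f_\ell\circ\cdots\circ f_1(\boldsymbol{0}))$ when the edge sequence is a legitimate execution from $q_1$, and $\boldsymbol{0}$ otherwise, then read off both inclusions. Your packaging as a single inductive identity parametrised by the starting block is a mild generalisation but not a different method.

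You do, however, flag a genuine imprecision that the paper silently skips. Since $\mathcal{M}=\Gen{\{M^{(e)}:e\in E\}}$ is defined as the generated \emph{sub-semigroup} (non-empty products only), the case $i=1$, $\boldsymbol{x}=\boldsymbol{0}$ is witnessed by the empty execution, whose corresponding matrix product is the identity, which need not lie in $\mathcal{M}$ unless some cycle through $q_1$ fixes $\boldsymbol{v}_{\mathrm{init}}$. The paper's proof ends with ``From the above it follows\ldots'' without addressing this; as stated the equality can fail at $q_1$. Your fix---reading $\mathcal{M}$ as the generated monoid, or adjoining the identity by the empty-product convention---is exactly the right repair and does not affect the downstream use (Zariski closure of the reachable set), since adding a single point cannot change the closure except trivially.
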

\begin{proof}
For an edge $e=(q_i,f,q_j)$ of the affine program $\mathcal{A}$ we have
\[ M^{(e)} \mathrm{in}_i(\boldsymbol{x}) =
           \mathrm{in}_j(f(\boldsymbol{x})) \]
and
\[ M^{(e)} \mathrm{in}_k(\boldsymbol{x}) = \boldsymbol{0}  \]
for $k\neq i$.
Now consider a sequence of edges
  \[e_1=(q_{i_1},f_1,q_{j_1}),(q_{i_2},f_2,q_{j_2}),\ldots,
    e_\ell=(q_{i_\ell},f_\ell,q_{j_\ell}) \, . \]
If this sequence is a
  legitimate execution of $\mathcal{A}$, i.e.,
  ${i_1}=1$ and $j_k=i_{k+1}$ for
  $k=1,\ldots,\ell-1$, then we have
\[ M^{(e_\ell)} \cdots M^{(e_1)}\boldsymbol{v}_{\mathrm{init}}
  =\mathrm{in}_{j_\ell}(f_\ell(\ldots f_1(\boldsymbol{0})\ldots)) \, . \]
If the sequence is not a legitimate execution of $\mathcal{A}$ then we have
\[ M^{(e_\ell)} \cdots M^{(e_1)}\boldsymbol{v}_{\mathrm{init}} = \boldsymbol{0} \, .\]

From the above it follows that for all $i\in\{1,\ldots,m\}$,
\[
  S_{q_i} = \left\{\boldsymbol{x} \in \Q^n : \mathrm{in}_i(\boldsymbol{x})
              \in \{ M\boldsymbol{v}_{\mathrm{init}} : M \in {\mathcal{M}} \} \right\} \, . \]
\end{proof}

\begin{theorem}
  Given an affine program $\mathcal{A}$ we can compute
  $\{ V_q : q\in Q\}$---the real Zariski closure of the collecting
  semantics.
\end{theorem}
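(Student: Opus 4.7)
The plan is to reduce the computation of $\{V_q : q \in Q\}$ to the main algebraic result of the paper (the computability of the Zariski closure of a finitely generated matrix semigroup), together with routine algebraic-geometric manipulations. Proposition~\ref{prop:comp-inv} already identifies each $S_{q_i}$ with $\mathrm{in}_i^{-1}(\mathcal{M}\boldsymbol{v}_{\mathrm{init}})$, where $\mathcal{M}$ is the semigroup generated by the finitely many rational matrices $M^{(e)}$. So it suffices to compute the real Zariski closure of this orbit, location by location.

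First I would invoke the main algebraic result to compute a polynomial ideal defining $\Zcl[\Cx]{\mathcal{M}} \subseteq M_{m(n+1)}(\Cx)$. The evaluation map $\mu : M \mapsto M\boldsymbol{v}_{\mathrm{init}}$ is regular, so by Chevalley's theorem $\mu(\Zcl[\Cx]{\mathcal{M}})$ is a constructible subset of $\Cx^{m(n+1)}$; using the algorithms recalled in Section~\ref{algorithmic_manipulation}, I compute this image and then its Zariski closure $X$. A standard continuity argument (the chain $\mathcal{M}\boldsymbol{v}_{\mathrm{init}} \subseteq \mu(\Zcl[\Cx]{\mathcal{M}}) \subseteq \Zcl[\Cx]{\mathcal{M}\boldsymbol{v}_{\mathrm{init}}}$) yields $X = \Zcl[\Cx]{\mathcal{M}\boldsymbol{v}_{\mathrm{init}}}$.

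Next, for each $i$, let $L_i \subseteq \Cx^{m(n+1)}$ be the affine subspace defined by setting all blocks except the $i$-th to zero and the last coordinate of the $i$-th block to $1$; then $\mathrm{in}_i : \Cx^n \to L_i$ is a regular isomorphism. As observed in the proof of Proposition~\ref{prop:comp-inv}, every element of $\mathcal{M}\boldsymbol{v}_{\mathrm{init}}$ belongs to the closed set $\{\boldsymbol{0}\} \cup L_1 \cup \cdots \cup L_m$, whose pieces are pairwise disjoint. A short argument exploiting this disjointness (namely, writing $\mathcal{M}\boldsymbol{v}_{\mathrm{init}}$ as the disjoint union of its intersections with these pieces and using that a finite union of closures equals the closure of the union) gives $X \cap L_i = \Zcl[\Cx]{\mathrm{in}_i(S_{q_i})}$, and hence $\mathrm{in}_i^{-1}(X \cap L_i) = \Zcl[\Cx]{S_{q_i}}$. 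Intersecting with $\R^n$ via the recipe at the end of Section~\ref{algorithmic_manipulation} then yields $V_{q_i} = \Zcl[\R]{S_{q_i}}$, as required.

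The main obstacle is plainly the first step: computing $\Zcl[\Cx]{\mathcal{M}}$. All the remaining ingredients --- image of a constructible set under a regular map, closure of a constructible set, intersection with a linear subspace, and passage from complex to real Zariski closure --- reduce to Gr\"obner-basis manipulations already recalled in Section~\ref{algorithmic_manipulation}. The heart of the matter is therefore the semigroup-closure algorithm, which is the subject of the remainder of the paper.
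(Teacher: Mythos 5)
Your proposal follows essentially the same route as the paper: apply Theorem~\ref{thm:main} to obtain $\Zcl[\Cx]{\mathcal{M}}$, push forward through the regular map $M\mapsto M\boldsymbol{v}_{\mathrm{init}}$ and the projection onto the $i$-th block, take the closure, and finally intersect with $\R^n$ using the effective real-closure recipe of Section~\ref{algorithmic_manipulation}.

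One small difference is presentational: the paper's proof defines the set $\{\boldsymbol{x}\in\Cx^n:(\boldsymbol{x},1)\in\Pi_i(\{M\boldsymbol{v}_{\mathrm{init}}:M\in\Zcl{\mathcal{M}}\})\}$ and asserts that it is sandwiched between $S_{q_i}$ and $\Zcl{S_{q_i}}$, invoking only Zariski continuity of $\Pi_i$ and of $M\mapsto M\boldsymbol{v}_{\mathrm{init}}$. That assertion in fact depends on exactly the observation you make explicit: that $\mathcal{M}\boldsymbol{v}_{\mathrm{init}}$ is contained in the closed set $\{\boldsymbol{0}\}\cup L_1\cup\cdots\cup L_m$ with pairwise disjoint components, so that the condition ``last coordinate of block $i$ equals $1$'' already singles out $L_i$ inside $\Zcl{\mathcal{M}\boldsymbol{v}_{\mathrm{init}}}$ and no spurious points creep in from the closures of the other pieces. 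Your version therefore fills in a step that the paper leaves implicit. The two arguments are otherwise identical in structure and computational content.
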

\begin{proof}
From Proposition~\ref{prop:comp-inv} we have 
\begin{eqnarray*}
  S_{q_i} &=& \left\{\boldsymbol{x} \in \Q^n : \mathrm{in}_i(\boldsymbol{x})
              \in \{ M\boldsymbol{v}_{\mathrm{init}} : M \in {\mathcal{M}} \} \right\} \\
&=& \left\{ \boldsymbol{x} \in \Q^n : (\boldsymbol{x},1) \in 
   \Pi_i\left(\{M\boldsymbol{v}_{\mathrm{init}}:M\in\mathcal{M}\}\right) \right\} \, .
\end{eqnarray*}
By Theorem~\ref{thm:main} we can compute the complex Zariski closure
$\Zcl{\mathcal{M}}$ of the matrix semigroup $\mathcal{M}$.
Since the projection $\Pi_i$
and the map $M\mapsto M\boldsymbol{v}_{\mathrm{init}}$ are both
Zariski continuous, we have that 
\begin{eqnarray*}
S_{q_i} & \subseteq & 
\left\{ \boldsymbol{x} \in \Cx^n : (\boldsymbol{x},1) \in 
   \Pi_i\left(\{M\boldsymbol{v}_{\mathrm{init}}:M\in\Zcl{\mathcal{M}}\}\right) \right\} \\
&\subseteq & \Zcl{S_{q_i}} \, .
\end{eqnarray*}
Thus we can compute $\Zcl{S_{q_i}}$ as the complex Zariski closure of
\[ \left\{ \boldsymbol{x} \in \Cx^n : (\boldsymbol{x},1) \in 
   \Pi_i\left(\{M\boldsymbol{v}_{\mathrm{init}}:M\in\Zcl{\mathcal{M}}\}\right)
 \right\} \, ,\]
since the latter is a constructible set.

Finally we can compute $V_{q_i}=\Zcl[\R]{S_{q_i}}$---the real Zariski closure of
$S_{q_i}$---by intersecting the complex Zariski closure with $\R^n$.
As noted in Sec.~\ref{algorithmic_manipulation}, the intersection with
$\R^n$ is effective.

\end{proof}

\section{Zariski Closure of a Subgroup of $\GL_n(\Cx)$}
In this section we show how to compute the Zariski closure of the
subgroup of $\GL_n(\Cx)$ generated by a given constructible subset of
$\GL_n(\Cx)$.  We show this by reduction to the problem of computing
the Zariski closure of a finitely generated subgroup of
$\GL_n(\Cx)$.  An algorithm for the latter problem was given by
Derksen, Jeandel, and Koiran~\cite{DerksenJK05}.

Recall that for $X\subseteq\GL_n(\Cx)$ we use $\Gen{X}$ to denote the
\emph{sub-semigroup} of $\GL_n(\Cx)$ generated by $X$.  But we have:
\begin{lemma}[\cite{DerksenJK05}]\label{lem:closed_subsemigroup_is_subgroup}
A closed subsemigroup of $\GL_n(\Cx)$ is a subgroup.
\label{lem:closed-semi}
\end{lemma}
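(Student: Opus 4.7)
The plan is to exploit Noetherianity of the Zariski topology on $\GL_n(\Cx)$ to show that left multiplication by any element of the closed subsemigroup $S$ is in fact surjective on $S$, from which the group axioms follow quickly.

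First I would fix an arbitrary $a \in S$ and observe that left multiplication $L_a : \GL_n(\Cx) \to \GL_n(\Cx)$, $x \mapsto ax$, is an isomorphism of varieties: it is regular, and its two-sided inverse is $L_{a^{-1}}$, which is also regular (this uses only that $a$ is invertible as an element of $\GL_n$, which it is because $S \subseteq \GL_n(\Cx)$). In particular $L_a$ is a homeomorphism in the Zariski topology, so since $S$ is closed in $\GL_n(\Cx)$, the set $aS = L_a(S)$ is again closed in $\GL_n(\Cx)$. The semigroup property gives $aS \subseteq S$, and iterating yields a descending chain
\[
S \;\supseteq\; aS \;\supseteq\; a^2 S \;\supseteq\; a^3 S \;\supseteq\; \cdots
\]
of Zariski closed subsets of $\GL_n(\Cx)$.

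Next I would invoke Noetherianity of the Zariski topology on the variety $\GL_n(\Cx)$ (recall from Section~\ref{sub:alg-geo} that $\GL_n(\Cx)$ is cut out in $\Cx^{n^2+1}$ by $\det(x_{i,j})y = 1$, inheriting a Noetherian topology). The chain therefore stabilises: there exists $N$ with $a^N S = a^{N+1} S$. Applying $L_{a^{-N}}$ to both sides, which is legitimate since we are operating inside the ambient group $\GL_n(\Cx)$, yields $S = aS$.

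Finally I would read off the group structure from $aS = S$. Since $a \in S = aS$, we can write $a = a s$ for some $s \in S$; cancelling $a$ in $\GL_n(\Cx)$ gives $s = I$, so the identity lies in $S$. Then $I \in S = aS$ furnishes some $t \in S$ with $I = at$, so $t = a^{-1} \in S$. As $a$ was arbitrary, $S$ is closed under inversion and contains the identity, hence is a subgroup of $\GL_n(\Cx)$. There is no real obstacle here beyond recognising that $L_a$ is a variety isomorphism (so that $aS$ is genuinely closed, not merely constructible) and that the descending chain argument is available inside a Noetherian ambient topology.
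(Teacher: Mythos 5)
Your proof is correct and is essentially the same Noetherian descending-chain argument used in the cited reference~\cite{DerksenJK05}: left translation by $a\in S$ is a Zariski homeomorphism, so $S\supseteq aS\supseteq a^2S\supseteq\cdots$ is a descending chain of closed subsets of $\GL_n(\Cx)$ that must stabilise, giving $aS=S$ and hence $I_n,a^{-1}\in S$. The paper itself gives no proof of the lemma, merely citing that source.
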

In particular, if $X\subseteq\GL_n(\Cx)$ then $\Zcl{\Gen{X}}$ is a
subgroup of $\GL_n(\Cx)$.


Our aim is to generalise the following result.
\begin{theorem}[\cite{DerksenJK05}]
\label{th:derksen}
Given matrices $a_1,\ldots,a_k\in\GL_n(\Cx)$ with algebraic entries,
we can compute the closed subgroup $\Zcl{\Gen{a_1,\ldots,a_k}}$.
\end{theorem}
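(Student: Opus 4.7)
The plan is to compute $G := \Zcl{\Gen{a_1,\ldots,a_k}}$, which by Lemma~\ref{lem:closed-semi} is a Zariski-closed subgroup of $\GL_n(\Cx)$. I would exploit the structure theory of linear algebraic groups: $G$ has an identity connected component $G_0$ that is a normal subgroup of finite index, and $G_0$ is itself determined by its Lie algebra $\mathfrak{g} \subseteq M_n(\Cx)$. So the plan decomposes into two subtasks: compute a basis of $\mathfrak{g}$ (which determines $G_0$ by exponentiation), and then compute a finite set of coset representatives for $G/G_0$.

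For the Lie algebra, I would first apply the multiplicative Jordan decomposition to each generator, writing $a_i = s_i u_i$ with $s_i$ semisimple, $u_i$ unipotent, and both lying in $G$ (since an algebraic group is closed under Jordan components of its elements). Each unipotent $u_i$ gives a nilpotent element $\log(u_i) \in \mathfrak{g}$; closing under Lie brackets and the adjoint action of the $s_i$ produces the unipotent part of $\mathfrak{g}$ by routine linear algebra over $\Cx$. The semisimple parts $s_i$ are simultaneously diagonalizable after a computable change of basis, so what remains is to compute the Zariski closure of the finitely generated abelian subgroup of $(\Cx^*)^d$ formed by their eigenvalue tuples.

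The main obstacle is precisely this last step. Computing the closure of such a multiplicative subgroup is equivalent to computing the lattice $\Lambda \subseteq \Z^d$ of all integer vectors $(m_1,\ldots,m_d)$ such that $\lambda_1^{m_1}\cdots\lambda_d^{m_d}=1$ holds for every eigenvalue tuple in the generating set. This is the point where the hypothesis of algebraic entries becomes essential, and where Masser's effective theorem on multiplicative relations among algebraic numbers is invoked: it furnishes a computable upper bound on the exponents occurring in a basis of $\Lambda$, reducing the problem to a finite linear-algebraic search. The resulting subtorus, combined with the unipotent data from the previous paragraph, yields $\mathfrak{g}$ and hence $G_0$.

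Finally, to pass from $G_0$ to $G$ I would enumerate coset representatives of $G/G_0$. Multiplying together words in $a_1,\ldots,a_k$ of increasing length and reducing modulo $G_0$ (which is now algorithmically available), one discovers new cosets; termination is guaranteed because $G/G_0$ is a finite subgroup of $\GL_n(\Cx)/G_0$, and Schur's theorem together with Jordan's bound on finite linear groups give an a priori bound on when the list of cosets must stabilise. The union of the translates $g_j G_0$ over the discovered representatives is $G$, represented concretely as an ideal in the coordinate ring of $\GL_n(\Cx)$.
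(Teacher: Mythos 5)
Your proposal takes a genuinely different route from the paper---the paper (following Derksen, Jeandel, and Koiran) works purely group-theoretically by iteratively building up a closed irreducible variety $H$ containing the identity and a finite set $S$, using the Jordan decomposition only of \emph{individual} elements (to compute $\IdComp{\Zcl{\Gen{y}}}$ for a single $y$), and proving termination via Schur's theorem on finitely generated torsion linear groups. You instead aim directly at the Lie algebra of $G_0$. This is an appealing idea, but it hits a fatal obstruction at a central step.

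The step that breaks is your claim that ``the semisimple parts $s_i$ are simultaneously diagonalizable after a computable change of basis.'' Two semisimple matrices are simultaneously diagonalizable only if they commute, and the semisimple parts of distinct generators $a_1,\ldots,a_k$ have no reason to commute. For example, take $G=\mathrm{SL}_2(\Cx)$ generated by two diagonalizable matrices $s_1=\mathrm{diag}(\lambda,\lambda^{-1})$ and $s_2=P\,\mathrm{diag}(\mu,\mu^{-1})\,P^{-1}$ with $\lambda,\mu$ not roots of unity and $P$ generic: here the generators are already semisimple (so your unipotent side contributes nothing), they do not commute, and their images in $(\Cx^*)^d$ after any single change of basis cannot both be diagonal. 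Yet $\Zcl{\Gen{s_1,s_2}}=\mathrm{SL}_2(\Cx)$. Your reduction to multiplicative relations among eigenvalue tuples (the Masser step) collapses here---it would return at most a two-dimensional torus, missing the whole group. The correct use of Jordan decomposition, as in the paper, is applied to \emph{one matrix at a time}: the semisimple and nilpotent Jordan components of a single $y$ do commute, so $\Zcl{\Gen{y}}$ genuinely reduces to a torus-plus-unipotent computation, and these pieces are then combined by the iterative closure $H:=\Zcl{H\cdot yHy^{-1}}$ rather than by any attempt at simultaneous diagonalization.

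A secondary issue, even granting commutativity, is that a Lie subalgebra of $\mathfrak{gl}_n$ closed under brackets and the adjoint action of the $s_i$ need not be an \emph{algebraic} Lie algebra, so exponentiating it need not give a Zariski-closed subgroup; one would have to pass to its algebraic hull, which is itself nontrivial to compute. And your termination argument for the coset enumeration of $G/G_0$ is also incomplete as stated: $G/G_0$ is not a subgroup of $\GL_n(\Cx)/G_0$, and while one can bound the component group in terms of $n$, the paper's termination argument (Burnside plus Schur applied to $\widetilde{G}/\widetilde{H}$ inside the algorithm's loop) is doing real work that your sketch elides. Overall: the identification of Masser's theorem and the Jordan/Schur ingredients is correct and in the spirit of the paper, but the simultaneous-diagonalization step is a genuine gap that the paper's iterative construction is specifically designed to avoid.
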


The first generalisation is as follows.
\begin{corollary}
  Let $a_1,\ldots,a_k\in\GL_n(\Cx)$ and let $Y\subseteq \GL_n(\Cx)$ be an
  irreducible variety containing the identity $I_n$.  Assume that
  $a_1,\ldots,a_k$ have algebraic entries and that $Y$ is presented as
  the zero set of a finite collection of polynomials with algebraic
  coefficients.  We then have that $\Zcl{\Gen{a_1,\ldots,a_k,Y}}$ is computable
  from $Y$ and the $a_i$.
\label{cor:derksen}
\end{corollary}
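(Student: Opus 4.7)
The plan is to reduce to Theorem~\ref{th:derksen} by replacing the variety $Y$ with finitely many matrices having algebraic entries whose generated semigroup has the same Zariski closure as $\Gen{Y}$. Concretely, I would enumerate the points $y_1,y_2,\ldots$ of $Y$ with algebraic coordinates (possible since algebraic numbers admit an effective enumeration and membership in $Y$ is decidable by evaluating its defining polynomials). For each $s$, apply Theorem~\ref{th:derksen} to compute the closed subgroup $H_s:=\Zcl{\Gen{y_1,\ldots,y_s}}$, and test whether $Y\subseteq H_s$ (a decidable radical-ideal-membership question, solvable via Gr\"obner bases). At the first $s$ for which the test succeeds, output $\Zcl{\Gen{a_1,\ldots,a_k,y_1,\ldots,y_s}}$, once again via Theorem~\ref{th:derksen}.

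The search terminates by Noetherianity: the ascending chain $H_1\subseteq H_2\subseteq\cdots$ stabilizes at some closed set $H_\infty\subseteq\GL_n(\Cx)$, which by Lemma~\ref{lem:closed_subsemigroup_is_subgroup} is a closed subgroup. Because $Y$ is irreducible and defined by polynomials with algebraic coefficients, its algebraic-coordinate points are Zariski dense in $Y$; the closed set $H_\infty$ contains all such points, and therefore contains $Y$, so the test $Y\subseteq H_s$ succeeds at some finite stage.

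For correctness, write $N:=\Zcl{\Gen{Y}}$. Repeated application of Lemma~\ref{lem:prod_irred_irred} shows each $\Zcl{Y^m}$ is irreducible, and since $I_n\in Y$ the chain $\Zcl{Y}\subseteq\Zcl{Y^2}\subseteq\cdots$ stabilizes by Noetherianity, so by Lemma~\ref{lem:closed_subsemigroup_is_subgroup} $N$ is an irreducible closed subgroup. Each $y_i\in Y\subseteq N$ yields $H_s\subseteq N$; conversely, $Y\subseteq H_s$ together with $H_s$ being a closed subgroup gives $\Gen{Y}\subseteq H_s$ and hence $N\subseteq H_s$, so $H_s=N$. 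Setting $G:=\Zcl{\Gen{a_1,\ldots,a_k,y_1,\ldots,y_s}}$, we have $G\subseteq\Zcl{\Gen{a_1,\ldots,a_k,Y}}$ because each $y_i\in Y$, while the reverse inclusion follows from $Y\subseteq H_s\subseteq G$. The crux will be the density claim used for termination---that the algebraic points of an irreducible variety defined over the algebraic numbers are Zariski dense---which both guarantees termination of the search and certifies that a finite algebraic sample Zariski-generates $\Zcl{\Gen{Y}}$.
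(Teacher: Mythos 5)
Your argument is correct, but it takes a genuinely different route from the paper's. The paper proceeds structurally: it sets $G=\Zcl{\Gen{a_1,\ldots,a_k}}$ (computed directly from Theorem~\ref{th:derksen}), computes the smallest Zariski-closed subgroup $H$ containing $Y$ and stable under conjugation by the $a_i$ via the explicit saturation loop of Algorithm~\ref{alg:reduce} (termination guaranteed by an ascending chain of \emph{irreducible} varieties via Lemma~\ref{lem:prod_irred_irred}, hence bounded by $\dim\GL_n(\Cx)$), and then outputs $\Zcl{G\cdot H}$. You instead run an unbounded search: enumerate the $\Qbar$-points $y_1,y_2,\ldots$ of $Y$, call Theorem~\ref{th:derksen} on each finite prefix, and stop when the resulting closed group swallows $Y$. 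Your termination argument correctly combines Noetherianity of the ascending chain $H_1\subseteq H_2\subseteq\cdots$ with the fact that the $\Qbar$-points of a variety cut out over $\Qbar$ are Zariski dense (true, and a standard consequence of $\Qbar$ being algebraically closed; one can prove it by writing any $f\in\Cx[x]$ vanishing on $Y(\Qbar)$ as a $\Qbar$-linear combination $\sum\gamma_i f_i$ with $f_i\in\Qbar[x]$ and $\gamma_i$ linearly independent over $\Qbar$, so each $f_i$ vanishes on $Y(\Qbar)$, hence on $Y$). The containment test $Y\subseteq H_s$ is decidable via radical membership, as you say, and your final correctness chain is sound. The trade-off is that the paper's method is an explicit bounded-iteration algorithm that never needs to enumerate points or invoke density, whereas yours has no a priori bound on $s$, makes many calls to the group-closure subroutine, and rests on an extra (if standard) density fact; in exchange it requires no analysis of the conjugation-stable hull and is shorter to state. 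Both are valid proofs of computability.
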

\begin{proof}
  Let $G=\Zcl{\Gen{a_1,\ldots,a_k}}$ and let $H$ be the smallest
  Zariski closed subgroup of $\GL_n(\Cx)$ that contains $Y$ and is
  closed under conjugation by $a_1,\ldots,a_k$ (i.e., such that
  $a_iHa_i^{-1}\subseteq H$ for $i=1,\ldots,k$).  We claim that
  $\Zcl{\Gen{a_1,\ldots,a_k,Y}} = \Zcl{G\cdot H}$.

  To prove the claim, note that since $H$ is closed under conjugation
  by $a_1,\ldots,a_k$ then $H$ is also closed under conjugation by any
  $g \in \Gen{a_1,\ldots,a_k}$.  Moreover, since the map
  $g\mapsto ghg^{-1}$ is Zariski continuous for each fixed $h\in H$,
  we have that $H$ is closed under conjugation by any
  $g\in G=\Zcl{\Gen{a_1,\ldots,a_k}}$.  It follows that $G\cdot H$ is
  a sub-semigroup of $\GL_n(\Cx)$ and so $\Zcl{G\cdot H}$ is a group
  by Lemma~\ref{lem:closed-semi}.  But
\[\{a_1,\ldots,a_k\} \cup Y \subseteq G\cdot H \subseteq
\Zcl{\Gen{a_1,\ldots,a_k,Y}} \] and hence
$\Zcl{G\cdot H}=\Zcl{\Gen{a_1,\ldots,a_k,Y}}$.

It remains to show that we can compute $\Zcl{G\cdot H}$.  Now we can
compute $G$ by Theorem~\ref{th:derksen}. To compute $H$ we use the
following algorithm:

\RestyleAlgo{ruled}
\begin{procedure}
\Input{Irreducible variety $Y\subseteq\GL_n(\Cx)$ containing $I_n$}
\Input{$a_1,\ldots,a_k\in\GL_n(\Cx)$}
\label{alg:reduce:initH} $H:=Y$\; $S=\{a_1,\ldots,a_k,I_n\}$\;
\label{alg:reduce:loop:start} \Repeat{$H_{old}=H$}{
    $H_{old}:=H$\;
    \For{$y\in S$}{
        $H:=\Zcl{H\cdot yHy^{-1}}$\label{alg:reduce:newH2}\;
    }
}
\label{alg:reduce:loop:end}
\Output{$H$}
\caption{FinPlusIrredClosure($a_1,\ldots,a_k,Y$)}
\label{alg:reduce}
\end{procedure}

We show that Algorithm~\ref{alg:reduce} computes the\linebreak smallest subgroup
$H$ of $\GL_n(\Cx)$ that is Zariski closed, contains $Y$, and is
closed under conjugation by $a_1,\ldots,a_k$.  To this end, notice
that since $Y$ contains the identity the successive values taken by
$H$ in the algorithm form an increasing chain of sub-varieties of
$\GL_n(\Cx)$. Moreover by Lemma~\ref{lem:prod_irred_irred} this chain
is in fact an increasing chain of \emph{irreducible} sub-varieties.
But such a chain has bounded length since $\GL_n(\Cx)$ has finite
dimension and hence the algorithm must terminate.

We know that $Y\subseteq H$ on termination.  Moreover, from the loop
termination condition, it clear that on termination $H$ must be\linebreak closed
under conjugation by $a_1,\ldots,a_k$, and be a Zariski closed
sub-semigroup of $\GL_n(\Cx)$ (and hence a sub-group of $\GL_n(\Cx)$
by Lemma~\ref{lem:closed_subsemigroup_is_subgroup}).  Finally, by construction, $H$ is the
smallest such subgroup of $\GL_n(\Cx)$.  This concludes the proof.
\end{proof}

We can now prove the main result of this section.
\begin{theorem}\label{th:closure_definable_subgroup}
Given a constructible subset $A$ of $\GL_n(\Cx)$, we can 
compute $\Zcl{\Gen{A}}$.
\end{theorem}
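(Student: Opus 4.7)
The plan is to reduce to Corollary~\ref{cor:derksen}. First I observe that one may replace $A$ by its Zariski closure $\Zcl{A}$: since matrix multiplication $\mu_m\colon\GL_n(\Cx)^m\to\GL_n(\Cx)$ is a regular (hence Zariski continuous) map and $\Zcl{A^m}=\Zcl{A}^m$ in the product variety, we have $\mu_m(\Zcl{A}^m)\subseteq \Zcl{\mu_m(A^m)}\subseteq \Zcl{\Gen{A}}$ for every $m$, whence $\Gen{\Zcl{A}}\subseteq\Zcl{\Gen{A}}$ and so $\Zcl{\Gen{A}}=\Zcl{\Gen{\Zcl{A}}}$; the closure $\Zcl{A}$ is itself computable by the algorithm of Sec.~\ref{algorithmic_manipulation}. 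So assume henceforth that $A$ is Zariski closed, and decompose $A=C_1\cup\cdots\cup C_k$ into its irreducible components via primary decomposition. Each $C_i$ is defined by polynomials with algebraic coefficients, hence contains a point $c_i\in \GL_n(\Cx)$ with algebraic entries, which can be exhibited by standard Gr\"obner-basis methods (for instance by intersecting $C_i$ with $\dim(C_i)$ generic rational hyperplanes to obtain a zero-dimensional variety and then solving). Setting $Y_i:=c_i^{-1}C_i$, each $Y_i$ is the image of $C_i$ under the regular automorphism $x\mapsto c_i^{-1}x$ of $\GL_n(\Cx)$, and is therefore an irreducible Zariski closed subvariety of $\GL_n(\Cx)$ containing $I_n$.

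Next I would prove the identity
\[ \Zcl{\Gen{A}}=\Zcl{\Gen{c_1,\ldots,c_k,Y_1,\ldots,Y_k}}. \]
The inclusion $\subseteq$ is immediate from $C_i=c_iY_i$. For the converse, $\Zcl{\Gen{A}}$ is a closed subsemigroup of $\GL_n(\Cx)$ and therefore a subgroup by Lemma~\ref{lem:closed-semi}. Since $c_i=c_i\cdot I_n\in c_iY_i\subseteq\Gen{A}$, the inverse $c_i^{-1}$ lies in $\Zcl{\Gen{A}}$, so $Y_i=c_i^{-1}C_i\subseteq\Zcl{\Gen{A}}\cdot\Zcl{\Gen{A}}\subseteq\Zcl{\Gen{A}}$.

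Finally I would collapse the $Y_i$'s into a single irreducible variety containing the identity. Define $Z_1:=Y_1$ and $Z_{i+1}:=\Zcl{Z_i\cdot Y_{i+1}}$; an induction using Lemma~\ref{lem:prod_irred_irred} shows that each $Z_i$ is irreducible, and a short argument exploiting that right-multiplication by a fixed element is a Zariski automorphism of $\GL_n(\Cx)$ gives $Z_k=\Zcl{Y_1\cdots Y_k}=:Y$. Since every $Y_j$ contains $I_n$, we have $Y_i\subseteq Y_1\cdots Y_k\subseteq Y$, and conversely $Y_1\cdots Y_k\subseteq \Gen{Y_1,\ldots,Y_k}$, so
\[ \Zcl{\Gen{A}}=\Zcl{\Gen{c_1,\ldots,c_k,Y}}, \]
and the right-hand side is computable by Corollary~\ref{cor:derksen}. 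The main obstacle I anticipate is the purely algorithmic step of producing an algebraic-coordinate point in each irreducible component $C_i$; this rests on effective techniques for primary decomposition and for finding points on zero-dimensional varieties defined over the algebraic numbers. Once that is granted, the remainder of the proof is an orchestration of Corollary~\ref{cor:derksen} with the group-theoretic Lemma~\ref{lem:closed-semi}.
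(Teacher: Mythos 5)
Your proposal is correct and follows essentially the same route as the paper: decompose $\Zcl{A}$ into irreducible components, pick an algebraic point $c_i$ in each, translate by $c_i^{-1}$ to obtain irreducible varieties $Y_i$ through the identity, merge them into the single irreducible set $Y=\Zcl{Y_1\cdots Y_k}$ via Lemma~\ref{lem:prod_irred_irred}, and invoke Corollary~\ref{cor:derksen}. Your write-up is somewhat more explicit than the paper's (notably in justifying the replacement of $A$ by $\Zcl{A}$, which the paper leaves implicit), but there is no substantive difference in the argument.
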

\begin{proof}
  Let $X_1,\ldots,X_k$ be the irreducible components of $\Zcl{A}$,
  which are computable from $A$.  For each $i$, compute a point
  $a_i\in X_i$ with algebraic entries (using, e.g., the procedure
  of~\cite[Chapter 12.6]{BPR06}).  Form $Y_i=a_i^{-1}X_i$ which is an irreducible
  variety containing the identity and let
  $Y=\Zcl{Y_1\cdot Y_2\cdots Y_k}$ which by
  Lemma~\ref{lem:prod_irred_irred} is also an irreducible variety
  containing the identity. We then have that
  $\Zcl{\Gen{A}}=\Zcl{\Gen{a_1,\ldots,a_k,Y}}$.  Indeed, clearly
 $\Gen{A}=\Gen{a_1,\ldots,a_k,Y_1\cdot Y_2\cdots Y_k}$ and thus
\begin{align*}
\Zcl{\Gen{A}}
&=\Zcl{\Gen{a_1,\ldots,a_k,Y_1\cdot Y_2\cdots Y_k}}\\
&=\Zcl{\Gen{a_1,\ldots,a_k,\Zcl{Y_1\cdot Y_2\cdots Y_k}}}.
\end{align*}
We can compute the closure of $\Gen{a_1,\ldots,a_k,Y}$ thanks to
Corollary~\ref{cor:derksen}.
\end{proof}

\section{Zariski Closure of a Finitely Generated Matrix Semigroup}
\label{sec:main-algorithm}
In this section we give a procedure to compute the Zariski closure of
a finitely generated matrix semigroup.  We proceed by induction on the
rank of the generators. To this end, it is useful to generalise from
finite sets of generators to constructible sets of generators.  In
particular, we will use Theorem~\ref{th:closure_definable_subgroup} on
the computability of the Zariski closure of the group generated by a
constructible set of matrices.  

We first introduce a graph structure on the set of generators that
allows us to reason about all products of generators that have a given
rank.
\subsection{A Generating Graph}
Given integers $n$ and $r$, let $A\subseteq M_n(\Cx)$ be a set of
matrices of rank $r$.  We define a labelled directed graph
$\mathcal{K}(A)$ as follows:
\begin{itemize}
\item There is a vertex $(U,V)$ for each pair of subspaces
  $U,V\subseteq \Cx^n$ such that $\dim(V)=r$, $\dim(U)=n-r$, and $U\cap
  V=0$.
\item There is a labelled edge $(U,V) \xrightarrow{a} (U',V')$
for each pair of vertices $(U,V)$ and $(U',V')$, and  
 each matrix $a\in A$ such that $\ker(a)=U$ and
  $\im(a)=V'$.
\end{itemize}
We note in passing that $\mathcal{K}(A)$ can be seen as an
edge-induced subgraph of the \emph{Karoubi
  envelope}~\cite{Steinberg16} of the semigroup $M_n(\Cx)$.

A \emph{path} in $\mathcal{K}(A)$ is a non-empty sequence of
consecutive edges
\[(U_0,V_0) \xrightarrow{a_1} (U_1,V_1) \xrightarrow{a_2} (U_2,V_2)
\xrightarrow{a_3} \ldots \xrightarrow{a_m} (U_m,V_m).\] The length of
such a path is $m$ and its \emph{label} is the product $a:=a_m \cdots
a_1$.  Matrix $a$ has rank $r$ since $\ker(a_{i+1}) \cap \im(a_i)=0$
for $i=1,\ldots,m-1$.  It is moreover clear that $\{a\in\Gen{A} :
\rank(a)=r\}$ is precisely the set of labels over all paths in
$\mathcal{K}(A)$.  We will denote that there is a path from $(U,V)$
to $(U',V')$ with label $a$ by writing $(U,V) \xRightarrow{a}
(U',V')$.

  The following sequence of propositions concerns the structure of the
  SCCs in $\mathcal{K}(A)$.  The respective proofs make repeated
  use of the fact that for each vertex $(U,V)$ of $\mathcal{K}(A)$
  we have $\iota(U) \wedge \iota(V) \neq 0$ and that $\dim
  \ExtAlg[r](\Cx^n)=\binom{n}{r}$ (cf.~Sec.~\ref{sec:misc}).  We say
  that an SCC of $\mathcal{K}(A)$ is \emph{non-trivial} if it contains
  a vertex $(U,V)$ such that there is a path from $(U,V)$ back to
  itself. Figure~\ref{fig:semigroup} summarises the structural results on $\mathcal{K}(A)$.

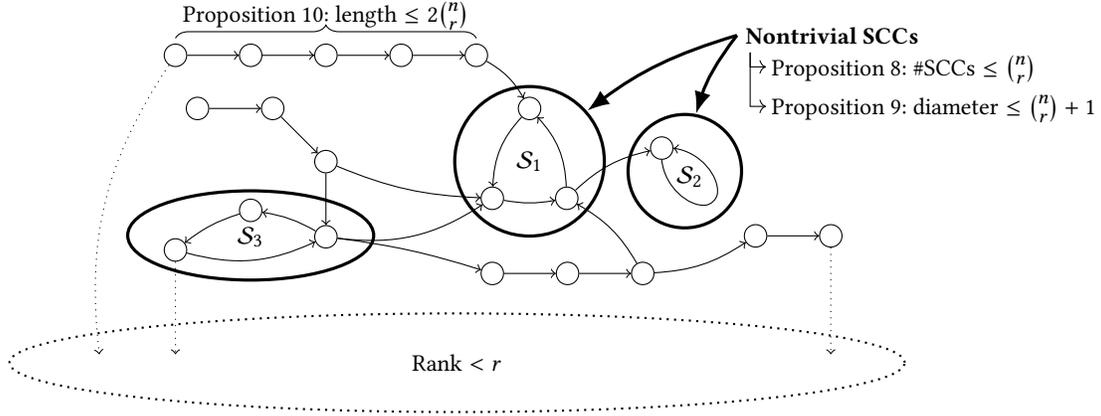
\begin{figure*}
\begin{tikzpicture}[
        kaedge/.style={-{Classical TikZ Rightarrow}},
        kapath/.style={-{Classical TikZ Rightarrow}},
        kasccedge/.style={very thick,-Latex},
        karankedge/.style={dotted, -{Classical TikZ Rightarrow}},
    ]
    \node (a0) [draw,circle] {};
    \node[right of=a0] (a1) [draw,circle] {}; \draw[kaedge] (a0) -- (a1);
    \node[right of=a1] (a2) [draw,circle] {}; \draw[kaedge] (a1) -- (a2);
    \node[right of=a2] (a3) [draw,circle] {}; \draw[kaedge] (a2) -- (a3);
    \node[right of=a3] (a4) [draw,circle] {}; \draw[kaedge] (a3) -- (a4);

    \node[below left of=a1] (d1) [draw,circle] {};
    \node[right of=d1] (d2) [draw,circle] {}; \draw[kaedge] (d1) -- (d2);
    \node[below right of=d2] (d3) [draw,circle] {}; \draw[kaedge] (d2) -- (d3);

    \node[below right of=a4] (c1a) [draw,circle] {};
    \node[node distance=0.7cm, below of=c1a] (c1) {$\mathcal{S}_1$};
    \node[node distance=0.7cm, below left of=c1] (c1b) [draw,circle] {}; \draw[kapath] (c1a) to[bend right=20] (c1b);
    \node[node distance=0.7cm, below right of=c1] (c1c) [draw,circle] {}; \draw[kapath] (c1b) to[bend right=10] (c1c);
    \draw[kapath] (c1c) to[bend right=20] (c1a);
    \node[every fit/.style={circle,draw,very thick,inner sep=0pt,yshift=-3pt}, fit=(c1a) (c1b) (c1c)] (c1circ) {};
    \draw[kaedge] (a4) to[bend left=15] (c1a);
    \draw[kaedge] (d3) to[bend right=10] (c1b);

    \node[below of=d3] (c3a) [draw,circle] {};
    \node[left of=c3a] (c3) {$\mathcal{S}_3$};
    \node[above of=c3,node distance=10pt] (c3c) [draw,circle] {}; \draw[kapath] (c3a) to[bend right=10] (c3c);
    \node[left of=c3,yshift=-5pt] (c3b) [draw,circle] {}; \draw[kapath] (c3c) to[bend right=10] (c3b);
    \draw[kapath] (c3b) to[bend right=20] (c3a);
    \node[every fit/.style={ellipse,draw,very thick,inner sep=0pt,yshift=-2pt}, fit=(c3a) (c3b) (c3c)] (c3circ) {};
    \draw[kaedge] (d3) -- (c3a);
    \draw[kaedge] (c3a) to[bend right=20] (c1b);

    \node[below of=c1b] (e1) [draw,circle] {}; \draw[kaedge] (c3a) to[bend left=5] (e1);
    \node[right of=e1] (e2) [draw,circle] {}; \draw[kaedge] (e1) -- (e2);
    \node[right of=e2] (e3) [draw,circle] {}; \draw[kaedge] (e2) -- (e3);
    \draw[kaedge] (e3) to[bend right=15] (c1c);

    \node[right of=e3, node distance=1.5cm, yshift=0.5cm] (b1) [draw,circle] {}; \draw[kaedge] (e3) to[bend right=15] (b1);
    \node[right of=b1] (b2) [draw,circle] {}; \draw[kaedge] (b1) -- (b2);

    \node[node distance=50pt,right of=c1a,yshift=-15pt] (c2a) [draw,circle] {};
    \node[below right of=c2a,node distance=15pt] (c2) {$\mathcal{S}_2$};
    \node[below right of=c2,node distance=15pt,inner sep=0pt] (c2bhidden) {};
    \draw[kaedge,rounded corners] (c2a) to[bend right=45] (c2bhidden) to[bend right=45] (c2a);
    \node[every fit/.style={ellipse,draw,very thick,inner sep=2pt,yshift=0pt}, fit=(c2a) (c2bhidden)] (c2circ) {};

    \draw[kaedge] (c1c) to[bend left=15] (c2a);

    \draw[decorate,decoration={brace,amplitude=5pt}] ([yshift=2pt]a0.north) -- ([yshift=2pt]a4.north)
        node[above,midway]
        {Proposition~\ref{prop:long_path_cross_scc}:
          $\text{length}\leq 2{n\choose r}$};

    \node[above of=c2a, xshift=1cm, node distance=1.5cm,anchor=west] (scctext) {\textbf{Nontrivial SCCs}};
    \node[anchor=north west,xshift=10pt,yshift=2pt] (sccprop) at (scctext.south west) {
        Proposition~\ref{prop:bound_number_nontrivial_sccs}:
        $\#\text{SCCs}\leq \binom{n}{r}$};
    \draw[->] ([xshift=5pt]scctext.south west) |- ([yshift=1pt,xshift=1pt]sccprop.west);
    \node[anchor=north west,yshift=2pt] (sccprop2) at (sccprop.south west) {
        Proposition~\ref{prop:short_path_in_scc}: $\text{diameter}\leq
        \binom{n}{r}+1$};
    \draw[->] ([xshift=5pt]scctext.south west) |- ([yshift=1pt,xshift=1pt]sccprop2.west);

    \draw[kasccedge] (scctext.west) to[bend right=10] (c1circ);
    \draw[kasccedge] (scctext.west) to[bend right=10] (c2circ);

    \node[below of=c3b, node distance=1.5cm] (rankb) {}; \draw[karankedge] (c3b) -- (rankb);
    \node[left of=rankb] (ranka) {}; \draw[karankedge] (a0) to[bend right=20] (ranka);
    \node (rankc) at (rankb -| b2) {}; \draw[karankedge] (b2) -- (rankc);

    \node[every fit/.style={ellipse,draw,thick, dotted,
        inner sep=10pt,yshift=0pt}, fit={([xshift=1cm]ranka.east) (rankb) ([xshift=-1cm]rankc.west)}] (rank) {};
    \draw (rank.center) node (ranktext) {$\text{Rank}<r$};
\end{tikzpicture}
\caption{\label{fig:semigroup}Graphical representation of $\mathcal{K}(A)$,
vertex and edge labels omitted for clarity. Note that the graph can have infinitely many vertices.
Propositions~\ref{prop:bound_number_nontrivial_sccs}~and~\ref{prop:short_path_in_scc} respectively
show that are only finitely many nontrivial SCCs and they have finite diameter.
Proposition~\ref{prop:long_path_cross_scc} shows that paths avoiding nontrivial SCCs must be short.
All paths in $\mathcal{K}(A)$ are labelled by rank $r$ matrices. Dotted arrows represent products
in the semigroup where the rank becomes less than $r$: those products do not correspond to labels in
$\mathcal{K}(A)$ and need to be handled separately.}
\end{figure*}

\begin{proposition}\label{prop:bound_number_nontrivial_sccs}
  $\mathcal{K}(A)$ has at most $\binom{n}{r}$ non-trivial SCCs.  
\end{proposition}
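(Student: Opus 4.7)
The plan is to associate to each non-trivial SCC a nonzero vector in $\ExtAlg[r]{\Cx^n}$, which has dimension $\binom{n}{r}$, and then to prove that the resulting family of vectors is linearly independent. Concretely, for each non-trivial SCC $\mathcal{S}_i$ ($i = 1, \ldots, k$) I would choose a vertex $(U_i, V_i) \in \mathcal{S}_i$ admitting a loop together with the label $a_i \in \Gen{A}$ of such a loop, so that $a_i$ is of rank $r$ with $\ker(a_i) = U_i$ and $\im(a_i) = V_i$.

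The structural observation driving the proof is the following: for distinct indices $i$ and $j$, if $V_i \cap U_j = 0$ then $\mathcal{S}_i$ reaches $\mathcal{S}_j$ in $\mathcal{K}(A)$. Indeed, under this transversality the product $a_j a_i \in \Gen{A}$ is a rank-$r$ matrix with kernel $U_i$ and image $V_j$; concatenating factorizations of $a_i$ and $a_j$ into elements of $A$ yields a sequence of generators whose every intermediate product remains of rank $r$, and this sequence spells out a legitimate path in $\mathcal{K}(A)$ from $(U_i, V_i)$ to $(U_j, V_j)$, the intermediate vertex $(U_j, V_i)$ existing precisely because $V_i \cap U_j = 0$.

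Since the condensation of $\mathcal{K}(A)$ is a DAG, I may order the non-trivial SCCs topologically as $\mathcal{S}_1, \ldots, \mathcal{S}_k$ so that whenever $\mathcal{S}_i$ reaches $\mathcal{S}_j$ one has $i \leq j$. The structural observation then forces $V_i \cap U_j \neq 0$ for $i > j$, i.e., $\iota(U_j) \wedge \iota(V_i) = 0$ in $\ExtAlg[n]{\Cx^n}$ for $i > j$. On the diagonal $\iota(U_i) \wedge \iota(V_i) \neq 0$ since $U_i \cap V_i = 0$ by definition of a vertex. Consequently, the $k \times k$ matrix $[\iota(U_j) \wedge \iota(V_i)]_{i, j}$, with entries in $\ExtAlg[n]{\Cx^n} \cong \Cx$, is upper triangular with nonzero diagonal, hence invertible. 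Given any relation $\sum_i c_i\, \iota(V_i) = 0$, wedging with each $\iota(U_j)$ yields a triangular homogeneous system whose only solution is $c_1 = \cdots = c_k = 0$. Thus $\iota(V_1), \ldots, \iota(V_k)$ are linearly independent in $\ExtAlg[r]{\Cx^n}$, so $k \leq \binom{n}{r}$.

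The main obstacle is the structural observation itself: one must carefully justify that concatenating factorizations of $a_i$ and $a_j$ actually describes a valid path in $\mathcal{K}(A)$. This uses the fact that in any factorization of a rank-$r$ element of $\Gen{A}$ into rank-$r$ generators, the first factor has the same kernel and the last factor has the same image as the full product, so that the ``joining'' vertex is $(U_j, V_i)$—and, crucially, that this pair is admissible as a vertex precisely thanks to the transversality hypothesis $V_i \cap U_j = 0$.
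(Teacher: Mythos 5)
Your proposal is correct and follows essentially the same strategy as the paper: topologically order the non-trivial SCCs, pick a vertex with a self-loop in each, use the non-existence of paths against the order to deduce the transversality failures $\iota(U_j)\wedge\iota(V_i)=0$ for $i>j$, and conclude linear independence inside $\ExtAlg[r]{\Cx^n}$. The only cosmetic difference is that the paper argues directly that each $\iota(U_j)$ lies outside the span of its predecessors, whereas you package the same pairing data as a triangular matrix and deduce independence of the $\iota(V_i)$; your careful justification of the ``rerouting'' of loop paths through the joining vertex $(U_j,V_i)$ correctly captures the step the paper makes implicitly.
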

\begin{proof}
  Let $(U_1,V_1),\ldots,(U_m,V_m)$ be an arbitrary finite set of
  vertices drawn from distinct non-trivial SCCs of $\mathcal{K}(A)$.
  To prove the proposition it suffices to show that
  $m\leq \binom{n}{r}$.

Assume that the
  vertices $(U_1,V_1),\ldots,(U_m,V_m)$ are given according to a
  topological ordering of SCCs---so that there is no path from
  $(U_j,V_j)$ back to $(U_i,V_i)$ for $i<j$.
  By assumption, for $i=1,\ldots,m$ there exists a path
  $(U_i,V_i)\xRightarrow{a_i} (U_i,V_i)$.  

On the one hand, for all $1 \leq i < j \leq m$, we have $\iota(U_i) \wedge \iota(V_j) =0$ (equivalently,
$U_i\cap V_j \neq 0$)---for otherwise 
there would be path \[(U_j,V_j)
\xRightarrow{a_j} (U_i,V_j) \xRightarrow{a_i} (U_i,V_i) \, , \]
contrary to the topological ordering.  On the other hand we have
that $\iota(U_j) \wedge \iota(V_j) \neq 0$ (equivalently, $U_j \cap V_j =
0$) for all $j\in\set{1,\ldots,m}$ by definition of $\mathcal{K}(A)$.
It follows that \[\iota(U_j) \not\in \Span\{\iota(U_i) :
i=1,\ldots,j-1\}\] for all $j\in\set{1,\ldots,m}$. Indeed, by the claim, any element $U$ in this span
satisfies $\iota(U)\wedge\iota(V_j)=0$ by bilinearity of the wedge
product.  We conclude that \[ \dim \Span \{ \iota(U_i) \in
\ExtAlg[r](\Cx^n) : i = 1,\ldots,j\} = j \] for all $1 \leq j \leq m$
and hence $m\leq \dim \ExtAlg[r](\Cx^n)=
\binom{n}{r}$, as we wished to prove.
\end{proof}

\begin{proposition}
\label{prop:short_path_in_scc}
If there is a path from $(U,V)$ and $(U',V')$ in $\mathcal{K}(A)$,
then there is a path from $(U,V)$ to $(U',V')$ of length at most
$\binom{n}{r}+1$.
\end{proposition}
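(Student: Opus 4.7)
Take a shortest path
\[ (U_0, V_0) \xrightarrow{a_1} (U_1, V_1) \xrightarrow{a_2} \cdots \xrightarrow{a_m} (U_m, V_m) \]
with $(U_0, V_0) = (U, V)$ and $(U_m, V_m) = (U', V')$. I would bound $m$ by showing that $\iota(V_1), \ldots, \iota(V_{m-1})$ are linearly independent in $\ExtAlg[r]{\Cx^n}$, whose dimension is $\binom{n}{r}$. This immediately yields $m - 1 \leq \binom{n}{r}$, hence $m \leq \binom{n}{r} + 1$.

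\textbf{Key step: a wedge-product shortcut.} Argue by contradiction and let $j \in \{1,\ldots,m-1\}$ be minimal with $\iota(V_j) = \sum_{i<j} c_i \iota(V_i)$, with some $c_i \neq 0$. Since $(U_j, V_j)$ is a vertex, $\iota(U_j) \wedge \iota(V_j) \neq 0$. Wedging the dependence relation with $\iota(U_j)$ produces some $i < j$ for which $\iota(U_j) \wedge \iota(V_i) \neq 0$, i.e., $U_j \cap V_i = 0$. The dimensions $\dim U_j = n-r$ and $\dim V_i = r$ are correct, so $(U_j, V_i)$ is a legitimate vertex of $\mathcal{K}(A)$.

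Now rewire the path by passing through $(U_j, V_i)$:
\[ (U_0, V_0) \to \cdots \to (U_{i-1}, V_{i-1}) \xrightarrow{a_i} (U_j, V_i) \xrightarrow{a_{j+1}} (U_{j+1}, V_{j+1}) \to \cdots \to (U_m, V_m). \]
The crucial observation is that an edge in $\mathcal{K}(A)$ only constrains the source's first coordinate and the target's second coordinate. Since $\ker(a_i) = U_{i-1}$ and $\im(a_i) = V_i$, the relabelled edge from $(U_{i-1},V_{i-1})$ to $(U_j, V_i)$ is valid; likewise $\ker(a_{j+1}) = U_j$ and $\im(a_{j+1}) = V_{j+1}$ validate the second new edge. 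The new path has length $m - (j - i) < m$, contradicting minimality.

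\textbf{Conclusion and main obstacle.} The $+1$ in the bound arises from the fact that the shortcut requires the edge $a_{j+1}$ to exist, which forces the independence argument to be restricted to the range $j \leq m - 1$; no shortcut is available when $j = m$, so $\iota(V_m)$ is allowed to lie in the span of the earlier $\iota(V_i)$'s. The principal subtlety to verify is that the rewiring genuinely produces a path in $\mathcal{K}(A)$: this rests on (i) the wedge-product identity $\iota(U_j)\wedge\iota(V_i)\neq 0$ certifying that $(U_j, V_i)$ is a vertex, and (ii) the edge-definition of $\mathcal{K}(A)$ depending only on the source's kernel and the target's image, which lets the original labels $a_i$ and $a_{j+1}$ be reused after rewiring. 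Once these two points are in place, the argument fits the same dimension-counting template already used in Proposition~\ref{prop:bound_number_nontrivial_sccs}.
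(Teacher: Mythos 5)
Your argument is correct and is essentially the paper's proof in contrapositive form: where the paper uses minimality of the shortest path to show directly that $U_j\cap V_i\neq 0$ for $0<i<j<m$ and then deduces linear independence of the $\iota(V_i)$, you assume a linear dependence, extract $U_j\cap V_i=0$ via the wedge product, and build the same two-edge shortcut $(U_{i-1},V_{i-1})\xrightarrow{a_i}(U_j,V_i)\xrightarrow{a_{j+1}}(U_{j+1},V_{j+1})$ to contradict minimality. The rewiring observation, the bilinearity step, and the dimension bound $m-1\leq\binom{n}{r}$ are exactly those in the paper.
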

\begin{proof}
  Let 
\begin{gather}
(U_0,V_0) \xrightarrow{a_1} (U_1,V_1) \xrightarrow{a_2}
  \ldots \xrightarrow{a_m}(U_m,V_m)
\label{eq:short}
\end{gather}
 be a shortest path from
  $(U_0,V_0)=(U,V)$ to $(U_m,V_m)=(U',V')$.  By construction we have
  that $U_i \cap V_i=0$ for $i=0,\ldots,m$.  Furthermore we have
  $U_j\cap V_i\neq 0$ for all $0< i<j <m$,
  for otherwise we would have a shortcut
\[ (U_{i-1},V_{i-1}) \xrightarrow{a_i} (U_j,V_i) \, 
    \xrightarrow{a_{j+1}} (U_{j+1},V_{j+1}) \, , \] contradicting the
    minimality of (\ref{eq:short}).  But then $\iota(V_j) \not\in
    \Span\{\iota(V_i) : 1\leq i<j\}$ for $j=1,\ldots,m-1$: indeed any
    element $V$ in this span satisfies $\iota(U_j)\wedge \iota(V)=0$
    by bilinearity of the wedge product, but we know that
    $\iota(U_j)\wedge\iota(V_j)\neq 0$. We conclude that
  \[\dim \Span \{ \iota(V_i) \in \ExtAlg[r](\Cx^n) : i\in\{1,\ldots,j\}\}=j\]
  for all $j=1,\ldots,m-1$. It follows that $m-1\leq {n\choose r}$.
\end{proof}

\begin{proposition}
\label{prop:long_path_cross_scc}
  Given any path $(U_0,V_0)\xrightarrow{a_1} (U_1,V_1)
  \xrightarrow{a_2} \ldots \xrightarrow{a_m} (U_m,V_m)$ in
  $\mathcal{K}(A)$, where $m=2\binom{n}{r}$, some vertex $(U_i,V_i)$
  lies in a non-trivial SCC.
\end{proposition}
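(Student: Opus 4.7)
My plan is to argue by contraposition, assuming no vertex $(U_i, V_i)$ on the path lies in a non-trivial SCC of $\mathcal{K}(A)$ and deriving a contradiction by pitting a rank upper bound against a rank lower bound for a certain matrix of wedge pairings. The key combinatorial observation is that if $V_b \cap U_a = 0$ for some $0 \leq a < b \leq m$ with $b \geq a + 2$, then the path vertex $(U_{a+1}, V_{a+1})$ lies on the cycle
\[(U_a, V_b) \xrightarrow{a_{a+1}} (U_{a+1}, V_{a+1}) \xrightarrow{a_{a+2}} \cdots \xrightarrow{a_{b-1}} (U_{b-1}, V_{b-1}) \xrightarrow{a_b} (U_a, V_b)\]
of $\mathcal{K}(A)$. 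Every edge is legitimate: the intermediate edges are those of the original path, while the first and last edges rely on $\ker(a_{a+1}) = U_a$, $\im(a_{a+1}) = V_{a+1}$, $\ker(a_b) = U_{b-1}$, $\im(a_b) = V_b$, the modified vertex $(U_a, V_b)$ being valid precisely because $U_a \cap V_b = 0$. The standing assumption therefore forces $\iota(U_a) \wedge \iota(V_b) = 0$ for all $0 \leq a < b \leq m$ with $b \geq a + 2$.

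Next I will define the $(m+1) \times (m+1)$ matrix $M$ by $M_{a,b} := \iota(U_a) \wedge \iota(V_b)$, viewed as a scalar via a fixed isomorphism $\ExtAlg[n](\Cx^n) \cong \Cx$. The previous step gives $M_{a,b} = 0$ whenever $b \geq a + 2$, while vertex validity $U_a \cap V_a = 0$ gives $M_{a,a} \neq 0$ for every $a$. On the one hand, $M$ arises by pulling back the non-degenerate wedge pairing $\ExtAlg[n-r](\Cx^n) \times \ExtAlg[r](\Cx^n) \to \Cx$ along the families $\{\iota(U_a)\}$ and $\{\iota(V_b)\}$, so it factors through a vector space of dimension $N := \binom{n}{r}$, giving $\rank(M) \leq N$. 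On the other hand, because $m = 2N$, the principal submatrix of $M$ on the $N+1$ even-indexed rows and columns $\{0, 2, 4, \ldots, 2N\}$ is lower triangular with nonzero diagonal---an entry strictly above its diagonal sits at position $(2k, 2l)$ with $l > k$, hence $2l \geq 2k+2$---so has rank $N+1$. This forces $\rank(M) \geq N+1$, contradicting the upper bound.

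The main subtlety lies in spotting the cycle construction. It is tempting to chase ``shortcuts'' in the spirit of Proposition~\ref{prop:short_path_in_scc}, but a shortcut in the graph does not by itself produce a cycle. The crucial insight is that the entire block of generators $a_{a+1}, \ldots, a_b$ can be wrapped into a closed walk passing through the path vertex $(U_{a+1}, V_{a+1})$ by replacing the source $(U_a, V_a)$ of this sub-path with $(U_a, V_b)$, and that this requires only the single disjointness condition $U_a \cap V_b = 0$ for validity; once this is in hand, the remainder is a routine exterior-algebra rank computation.
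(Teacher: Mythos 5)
Your proof is correct and rests on the same two ingredients as the paper's: the cycle built from a modified vertex $(U_a,V_b)$ with $U_a\cap V_b=0$ and $b\geq a+2$ (this is exactly the paper's two-edge detour $(U_{i-1},V_{i-1})\to(U_j,V_i)\to(U_{j+1},V_{j+1})$ with $j=a$, $i=b$, just traversed starting from a different vertex), and the bound $\dim\ExtAlg[r](\Cx^n)=\binom{n}{r}$ applied to the even-indexed vertices. The only difference is cosmetic: where the paper finds the needed disjoint pair directly via pigeonhole on $\Span\{\iota(U_0),\iota(U_2),\ldots\}$ followed by bilinearity of $\wedge$, you argue by contradiction that the even-indexed submatrix of the wedge-pairing matrix would be lower triangular with nonzero diagonal, hence of rank $\binom{n}{r}+1$, contradicting the factorisation through the $\binom{n}{r}$-dimensional pairing---a tidy repackaging of the identical linear algebra.
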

\begin{proof}
The set of $\binom{n}{r}+1$ vectors
$\set{\iota(U_0),\iota(U_2),\iota(U_4),\ldots,\iota(U_m)}$ is linearly dependent
since $\dim \ExtAlg[r](\Cx^n)=\binom{n}{r}$.  Thus there must exist
$i\in \set{0,\ldots,m}$ such that $\iota(U_i) \in\Span\set{\iota(U_j):
  j \leq i-2}$.  Now by definition of $\mathcal{K}(A)$ we have
$U_i\cap V_i=0$ and hence $\iota(U_i) \wedge
\iota(V_i)\neq 0$.  Thus by bilinearity of the wedge product there must exist $j\leq i-2$
such that $\iota(U_j) \wedge \iota(V_i)\neq 0$, that is, $U_j\cap
U_i=0$.  But then we have a path
\[ (U_{i-1},V_{i-1}) \xrightarrow{a_i} (U_j,V_i) \xrightarrow{a_{j+1}}
  (U_{j+1},V_{j+1}) \, ,\] showing that $(U_{i-1},V_{i-1})$ and
$(U_{j+1},V_{j+1})$ lie in the same (necessarily non-trivial)
SCC. Indeed, recall that $j\leq i-2$ so either
$(U_{j+1},V_{j+1})\xRightarrow{}(U_{i-1},V_{i-1})$ or
$(U_{j+1},V_{j+1})=(U_{i-1},V_{i-1})$ in the original path.
\end{proof}

\subsection{Adding Pseudo-Inverses}
We now focus on individual SCCs within $\mathcal{K}(A)$.  Let
$\mathcal{S}$ be such a non-trivial SCC.  For each edge
$(U,V) \xrightarrow{a} (U',V')$ in $\mathcal{S}$, define its
\emph{pseudo-inverse} to be a directed edge
$(U',V') \xrightarrow{a^{+}} (U,V)$, where $a^+\in M_n(\Cx)$ is the
unique matrix such that $\ker(a^+)=U'$, $\im(a^+)=V$, $a^+av=v$ for
all $v\in V$, and $aa^+v = v$ for all $v\in V'$.  We write
${\mathcal{S}^+}$ for the graph obtained from $\mathcal{S}$ by adding
pseudo-inverses of every edge in $\mathcal{S}$.

The graph ${\mathcal{S}^+}$ can be seen as the generator of a groupoid
in which the above-defined pseudo-inverse matrices are genuine
 inverses.  We do not develop this idea, except to observe that
not only edges but also paths in $\mathcal{S}$ have pseudo-inverses in
${\mathcal{S}^+}$.  Specifically, given a path
$(U,V) \xRightarrow{a} (U',V')$ in $\mathcal{S}$, one obtains a path
$(U',V') \xRightarrow{a^+} (U,V)$ in ${\mathcal{S}^+}$ by taking the
pseudo-inverse of each constituent edge.  In the remainder of this
section we show that the pseudo-inverses of all paths in $\mathcal{S}$
are already present in the Zariski closure $\Zcl{\Gen{A}}$.

\begin{proposition}\label{prop:two-closures-are-us}
  Let $(U,V)$ be a vertex of $\mathcal{S}$ and let
  $B\subseteq M_n(\Cx)$ be a constructible set of matrices such that
  there is a path $(U,V) \xRightarrow{b} (U,V)$ in $\mathcal{S}$ for
  all $b \in B$.  Then $\Zcl{\Gen{B}}$ is computable from $B$ and for
  every $b\in \Gen{B}$ the pseudo-inverse
  $(U,V) \xRightarrow{b^+} (U,V)$ is such that
  $b^+ \in \Zcl{\Gen{B}}$.
\end{proposition}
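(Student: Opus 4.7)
The plan is to reduce the problem to the invertible case handled by Theorem~\ref{th:closure_definable_subgroup}, by restricting the action of each $b \in B$ to the subspace $V$. The key observation is that every $b \in B$ labels a path $(U,V) \xRightarrow{b} (U,V)$ in $\mathcal{S}$, so (arguing inductively along such a path, using that $\ker a_{i+1} \cap \im a_i = 0$ at each composition step) one has $\ker b = U$ and $\im b = V$. Since $\dim U + \dim V = n$ and $U \cap V = 0$, the decomposition $\Cx^n = U \oplus V$ then yields $b|_V \in \GL(V)$, and $b$ is determined by $b|_V$ alone.

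To formalise this I would introduce the Zariski closed linear subspace
\[\mathcal{W} := \{ c \in M_n(\Cx) : c|_U = 0 \text{ and } \im c \subseteq V\}\]
of $M_n(\Cx)$ together with the linear isomorphism $\phi : \mathcal{W} \to \End(V)$, $c \mapsto c|_V$, and its inverse $\psi : \End(V) \to \mathcal{W}$. Both maps are regular, hence Zariski continuous, and the identification $\Cx^n = U \oplus V$ makes $\phi$ a \emph{semigroup} homomorphism on $\mathcal{W}$, since $(c_1 c_2)|_V = (c_1|_V)(c_2|_V)$ for $c_1, c_2 \in \mathcal{W}$. In particular $B \subseteq \mathcal{W}$, and $B' := \phi(B)$ is a constructible subset of $\GL(V) \cong \GL_r(\Cx)$ (the image of a constructible set under a regular map), with $\phi(\Gen{B}) = \Gen{B'}$.

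Theorem~\ref{th:closure_definable_subgroup} then computes $G := \Zcl{\Gen{B'}}$, which by Lemma~\ref{lem:closed-semi} is in fact a subgroup of $\GL(V)$. Since $\mathcal{W}$ is closed and $\psi$ is a homeomorphism onto $\mathcal{W}$, applying $\phi$ and $\psi$ to closures gives $\Zcl{\Gen{B}} = \psi(G)$, which is therefore computable from $B$. Now fix $b \in \Gen{B}$ and unwind the defining properties of its pseudo-inverse $b^+$, namely $\ker b^+ = U$, $\im b^+ = V$, and $b^+ b v = b b^+ v = v$ for all $v \in V$: these say exactly that $b^+ \in \mathcal{W}$ and $\phi(b^+) = \phi(b)^{-1}$. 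Since $\phi(b) \in \Gen{B'} \subseteq G$ and $G$ is a group, $\phi(b)^{-1} \in G$; hence $b^+ = \psi(\phi(b)^{-1}) \in \psi(G) = \Zcl{\Gen{B}}$, as required.

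The substantive step is identifying the reduction to $\GL(V)$ via $\phi$ and checking that $B'$ is constructible so that Theorem~\ref{th:closure_definable_subgroup} applies. The remaining verifications---that $\phi$ is multiplicative on $\mathcal{W}$, that $B'$ lands in $\GL(V)$, and that a pseudo-inverse of a path inverts $\phi$---all follow immediately from $\Cx^n = U \oplus V$ and from the definition of pseudo-inverse, so I do not anticipate a serious obstacle beyond bookkeeping.
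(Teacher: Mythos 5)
Your proof is essentially the paper's: the subspace $\mathcal{W}$ and the isomorphism $\phi:\mathcal{W}\to\End(V)$ are exactly the coordinate-free form of the paper's conjugation by a basis-change matrix $y$ adapted to $\Cx^n=U\oplus V$, which carries $B$ to block matrices $\begin{bsmallmatrix}c&0\\0&0\end{bsmallmatrix}$ with $c\in\GL_r(\Cx)$, after which Theorem~\ref{th:closure_definable_subgroup} and Lemma~\ref{lem:closed-semi} are invoked in the same way.

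One small slip to fix in the writeup: the displayed identity $\Zcl{\Gen{B}}=\psi(G)$ is not quite right, because $G=\Zcl{\Gen{B'}}$ is the closure \emph{inside} $\GL(V)$, which is not a closed subset of $\End(V)$; so $\psi(G)$ is constructible in $\mathcal{W}$ but need not be closed in $M_n(\Cx)$. The correct statement is $\Zcl{\Gen{B}}=\Zcl{\psi(G)}$, taking one extra closure (this is exactly what the paper does). The slip is harmless for what you need: $\psi(G)$ is still constructible (hence its closure is computable), and since $b^+=\psi(\phi(b)^{-1})\in\psi(G)\subseteq\Zcl{\psi(G)}=\Zcl{\Gen{B}}$, the conclusion about pseudo-inverses goes through unchanged.
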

\begin{proof}
  By construction, all elements of $B$ have kernel $U$ and image $V$,
  where $U\oplus V=\Cx^n$.  Thus there is an invertible matrix
  $y \in \GL_n(\Cx)$ such that for every $b\in B$ there exists
  $c \in \GL_r(\Cx)$ with
\[ y^{-1}b y = \begin{bmatrix} c & 0 \\ 
                             0 & 0 \end{bmatrix} \, . \]
Let 
\[ C:= \left\{ c\in\GL_r(\Cx) : \exists b\in B\, . y^{-1}b y = \begin{bmatrix} c & 0 \\
      0 & 0 \end{bmatrix} \right \} \, , \]
which is constructible. We can compute $\Zcl{\Gen{C}}$
(the Zariski closure of $\Gen{C}$ in the variety $\GL_r(\Cx)$) using
Theorem~\ref{th:closure_definable_subgroup}.  But then
\[ \left\{ y \begin{bmatrix} c & 0 \\ 0 & 0 \end{bmatrix} y^{-1} : c
    \in \Zcl{\Gen{C}} \right\} \] is a constructible subset of
$M_n(\Cx)$ whose closure equals $\Zcl{\Gen{B}}$. Note that we are
using the fact that $\Zcl{\Gen{C}}$ is a subvariety of $\GL_n(\Cx)$
thus it is constructible in $M_n(\Cx)$.  Finally, if
$b=y\begin{bmatrix} c & 0 \\ 0 & 0 \end{bmatrix} y^{-1}\in \Gen{B}$
then
$b^+=y\begin{bmatrix} c^{-1} & 0 \\ 0 & 0 \end{bmatrix}
y^{-1}\in\Zcl{\Gen{B}}$ since $c^{-1}\in\Zcl{\Gen{C}}$ (which is a
group by Lemma~\ref{lem:closed_subsemigroup_is_subgroup}).
\end{proof}

\begin{corollary}
  Suppose that $(U,V) \xRightarrow{a} (U',V')$ is a path in
  $\mathcal{S}$ with pseudo-inverse $(U',V') \xRightarrow{a^+} (U,V)$.
  Then $a^+ \in \Zcl{\Gen{A}}$.
\label{corl:include}
\end{corollary}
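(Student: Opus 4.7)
The plan is to use a single loop at $(U,V)$ passing through $a$, apply Proposition~\ref{prop:two-closures-are-us} to extract a useful pseudo-inverse, and then cancel the ``return trip'' by right-multiplication with an element of $\Gen{A}$ to recover $a^+$. Concretely, since $\mathcal{S}$ is a non-trivial SCC containing both $(U,V)$ and $(U',V')$, I would first pick any path $(U',V')\xRightarrow{c}(U,V)$ in $\mathcal{S}$, so that in particular $c\in\Gen{A}$. Concatenating with the given path yields a loop $(U,V)\xRightarrow{ca}(U,V)$, and Proposition~\ref{prop:two-closures-are-us} applied to the (trivially constructible) singleton $B:=\{ca\}$ gives $(ca)^+ \in \Zcl{\Gen{B}} \subseteq \Zcl{\Gen{A}}$.

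Two short algebraic identities then finish the job. The first is the ``reversal'' rule $(ca)^+ = a^+c^+$: one checks directly from $\ker c^+ = U$, $\im c^+ = V'$, $\ker a^+ = U'$, $\im a^+ = V$ that $a^+c^+$ has kernel $U$ and image $V$, and is the two-sided inverse of $ca$ on $V$ (using that $c^+c$ and $aa^+$ both restrict to the identity on $V'$); uniqueness of the pseudo-inverse at $(U,V)$ then forces $(ca)^+=a^+c^+$. The second is $a^+c^+c = a^+$: since $c^+c$ is the projection onto $V'$ along $U'=\ker c$, any $x = u' + v'$ with $u'\in U'$ and $v'\in V'$ satisfies $c^+cx = v'$, and then $a^+v' = a^+x$ because $a^+$ vanishes on $U'$.

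Combining, $a^+ = (ca)^+ \cdot c \in \Zcl{\Gen{A}} \cdot \Gen{A} \subseteq \Zcl{\Gen{A}}$, where the last inclusion uses that $\Zcl{\Gen{A}}$ is a closed subsemigroup of $M_n(\Cx)$ by continuity of matrix multiplication in each argument. The delicate step I expect is the reversal identity $(ca)^+=a^+c^+$: for general matrices the analogous identity fails, but here the compatibility conditions built into $\mathcal{K}(A)$ (in particular $V' \cap U' = 0$) are exactly what make $a^+c^+$ have the correct kernel, image, and local inverse properties to coincide with $(ca)^+$.
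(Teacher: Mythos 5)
Your proof is correct and follows essentially the same route as the paper: pick a return path $c$ in $\mathcal{S}$, form the self-loop $ca$ at $(U,V)$, apply Proposition~\ref{prop:two-closures-are-us} to get $(ca)^+ \in \Zcl{\Gen{A}}$, and recover $a^+ = (ca)^+ c$ using the reversal identity $(ca)^+ = a^+ c^+$ and the projection identity $a^+ c^+ c = a^+$. The paper's version writes this compactly as $a^+ = a^+ b^+ b = (ba)^+ b$; you have merely spelled out the pseudo-inverse identities that the paper leaves implicit, with the verification correctly resting on the transversality conditions $U\cap V = 0$ and $U'\cap V' = 0$ built into the vertices of $\mathcal{K}(A)$.
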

\begin{proof}
Since $\mathcal{S}$ is strongly connected, there is a path $(U',V')
\xRightarrow{b} (U,V)$.  Consider the path $(U,V) \xRightarrow{ba}
(U,V)$ and its pseudo-inverse $(U,V) \xRightarrow{(ba)^+} (U,V)$.  By
Proposition~\ref{prop:two-closures-are-us} we have $(ba)^+ \in
\Zcl{\Gen{A}}$.  We moreover have $a^+ = a^+b^+b=(ba)^+b$ and hence
$a^+ \in \Zcl{\Gen{A}}$, since $\Zcl{\Gen{A}}$ is a semigroup.
\end{proof}

\subsection{Maximum-Rank Matrices in the Closure}
Let $\mathcal{S}$ be a non-trivial SCC in $\mathcal{K}(A)$.  Write
$B \subseteq M_n(\Cx)$ for the set of labels of all paths in
${\mathcal{S}^+}$ of length at most $\binom{n}{r}+2$.  Moreover fix a
vertex $(U_*,V_*)$ in ${\mathcal{S}^+}$ and write $B_*$ for the set of
labels of all paths in ${\mathcal{S}^+}$ of length at most
$2{n\choose r}+3$ that are self-loops on $(U_*,V_*)$.

\begin{proposition}
\label{prop:closure_of_scc}
Let $\Gen{\mathcal{S}}$ denote the set of labels of all paths in $\mathcal{S}$.
Then 
\[ \Gen{\mathcal{S}} \subseteq B\Zcl{\Gen{B_*}} 
B \subseteq \Zcl{\Gen{A}}\]
\end{proposition}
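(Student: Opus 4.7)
The plan is to prove the two inclusions separately, with the left one carrying all the real content.

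\textbf{Second inclusion} ($B \cdot \Zcl{\Gen{B_*}} \cdot B \subseteq \Zcl{\Gen{A}}$). This part is routine. Every element of $B$ or $B_*$ is a finite product whose factors are either edges of $\mathcal{S}$ (so elements of $A$) or pseudo-inverses of such edges. By Corollary~\ref{corl:include}, each pseudo-inverse lies in $\Zcl{\Gen{A}}$, and $\Zcl{\Gen{A}}$ is a semigroup (being a closed subsemigroup of $M_n(\Cx)$, since matrix multiplication is Zariski continuous). Hence $B, B_* \subseteq \Zcl{\Gen{A}}$, so $\Gen{B_*}\subseteq\Zcl{\Gen{A}}$, and taking closures preserves the inclusion: $\Zcl{\Gen{B_*}}\subseteq\Zcl{\Gen{A}}$. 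Multiplying on both sides by $B$ stays in $\Zcl{\Gen{A}}$.

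\textbf{First inclusion} ($\Gen{\mathcal{S}} \subseteq B \cdot \Zcl{\Gen{B_*}} \cdot B$). Given a path $\pi:(U_0,V_0) \xRightarrow{a} (U_m,V_m)$ in $\mathcal{S}$, I will first use Proposition~\ref{prop:short_path_in_scc} to pick short paths $p:(U_*,V_*)\to(U_0,V_0)$ and $p':(U_m,V_m)\to(U_*,V_*)$ in $\mathcal{S}$, each of length at most $\binom{n}{r}+1$. Their pseudo-inverses $p^+$ and $(p')^+$ are paths in $\mathcal{S}^+$ of the same lengths, so $p^+,(p')^+\in B$. A direct calculation using the pseudo-inverse identities and the fact that $\ker(a)=U_0$, $\im(a)=V_m$ shows that
\[ a = (p')^+\cdot (p'ap)\cdot p^+, \]
since $app^+v = av$ for every $v$ (decompose $v = u+v_0$ with $u\in U_0$, $v_0\in V_0$, and use $au=0$, $pp^+v_0 = v_0$), and similarly $(p')^+p'(av)=av$ because $av\in V_m$. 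It then remains to show that the middle factor $p'ap$, which is the label of a self-loop on $(U_*,V_*)$ in $\mathcal{S}$, belongs to $\Gen{B_*}$.

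\textbf{The hard part} will be writing an arbitrarily long self-loop $\sigma$ on $(U_*,V_*)$ as a product of self-loops from $B_*$. I will do this by iterated splitting. If $\sigma$ has length at most $2\binom{n}{r}+3$, then already $\sigma\in B_*$. Otherwise, let $(W_i,W_i')$ be an interior vertex of $\sigma$, and pick (using Proposition~\ref{prop:short_path_in_scc} again) a short path $q_i:(U_*,V_*)\to(W_i,W_i')$ in $\mathcal{S}$ of length at most $\binom{n}{r}+1$, together with its pseudo-inverse $q_i^+$. Because the prefix $e_i\cdots e_1$ has image $W_i'=\im(q_i)$, the identity $q_iq_i^+|_{W_i'}=\mathrm{id}$ yields
\[ \sigma \;=\; \bigl(e_\ell\cdots e_{i+1}\cdot q_i\bigr)\cdot\bigl(q_i^+\cdot e_i\cdots e_1\bigr), \]
a product of two self-loops on $(U_*,V_*)$ in $\mathcal{S}^+$. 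Choosing $i=\binom{n}{r}+2$ makes the first factor a self-loop of length exactly $2\binom{n}{r}+3$ (so it lies in $B_*$), while the second factor has length strictly less than $\sigma$; iterating terminates and gives a decomposition of $\sigma$ into elements of $B_*$. The length bookkeeping—verifying that the splitting indeed shrinks the loop and that both resulting loops fit the length bound in $B_*$—is the main technical wrinkle, and the definition of $B_*$ (length $2\binom{n}{r}+3$) is precisely what makes the argument go through.
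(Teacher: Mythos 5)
Your proof is correct, but it follows a genuinely different decomposition strategy from the paper's. The paper handles the left inclusion in a single telescoping rewrite: for each interior vertex $(U_i,V_i)$ of the path it takes a short path $f_i$ from $(U_*,V_*)$ to $(U_i,V_i)$ (Proposition~\ref{prop:short_path_in_scc}), inserts $f_{i-1}f_{i-1}^+$ (which acts as the identity on $\im(a_{i-1}\cdots a_1)$) between each pair of consecutive edges, and regroups as
\[
a_n\cdots a_1 \;=\; (a_n f_{n-1})\,(f_{n-1}^+ a_{n-1} f_{n-2})\cdots(f_2^+ a_2 f_1)\,(f_1^+ a_1),
\]
so that the two outer factors land in $B$ and every middle factor is a self-loop on $(U_*,V_*)$ of length at most $2\binom{n}{r}+3$, hence in $B_*$. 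You instead conjugate once to a self-loop via $a=(p')^+(p'ap)p^+$ and then recursively peel off bounded-length self-loops from one end. Both arguments rest on exactly the same two ingredients --- the diameter bound of Proposition~\ref{prop:short_path_in_scc} and the pseudo-inverse identity $ff^+|_{\im f}=\mathrm{id}$ --- and both motivate the constants $\binom{n}{r}+2$ in the definition of $B$ and $2\binom{n}{r}+3$ in that of $B_*$. The paper's one-shot insertion is a bit cleaner, avoiding the recursion and a separate termination check; yours is sound as well. Two small wording slips worth fixing: with $i=\binom{n}{r}+2$ the peeled factor $q_i^+e_i\cdots e_1$ has length \emph{at most} $2\binom{n}{r}+3$, not exactly that, since $|q_i|$ is only upper-bounded by $\binom{n}{r}+1$; and at each splitting step only the peeled factor must fit the $B_*$ length bound, while the residual only needs to shrink strictly (it lands in $B_*$ once the base case is reached). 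Neither affects correctness.
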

\begin{proof}
  By Corollary~\ref{corl:include} we have that
  $B,B_* \subseteq \Zcl{\Gen{A}}$.  Thus the right-hand inclusion
  follows from the fact that $\Zcl{\Gen{A}}$ is a semigroup.

To establish the left-hand inclusion,
consider a path
\[ (U_0,V_0) \xrightarrow{a_1} (U_1,V_1) \xrightarrow{a_2} (U_2,V_2)
  \xrightarrow{a_3} \ldots \xrightarrow{a_n} (U_n,V_n) \] within
$\mathcal{S}$.  Proposition~\ref{prop:short_path_in_scc} ensures that
for each vertex $(U_i,V_i)$ there is a path
$(U_*,V_*) \xRightarrow{f_i} (U_i,V_i)$ in $\mathcal{S}$ of length at
most ${n\choose r}+1$.  Such a path has a pseudo-inverse
$(U_i,V_i) \xRightarrow{f_i^+} (U_*,V_*)$ in ${\mathcal{S}^+}$.  Now by
the definition of a pseudo-inverse we have $a_if_{i-1}f_{i-1}^+=a_i$
for all $i\in \set{1,\ldots,n}$.  Thus
\begin{eqnarray*}
a_n \ldots a_2a_1 &=& a_n f_{n-1}f_{n-1}^+
                      a_{n-1} f_{n-2} f_{n-2}^+ 
                      \cdots 
                      f_2 f_2^+ a_2 f_1f_1^+ a_1 \\
          &=&        a_nf_{n-1}(f_{n-1}^+ a_{n-1} f_{n-2}) \cdots 
                      (f_2^+ a_2 f_1) f_1^+ a_1 \, .
\end{eqnarray*}
The result follows from the observation that $a_nf_{n-1}$ and
$f_1^+a_1$ are both elements of $B$ and that
$f_i^+ a_i f_{i-1} \in B_*$ for $i=2,\ldots,n-1$.
\end{proof}

Recall from Proposition~\ref{prop:short_path_in_scc} that the graph
$\mathcal{K}(A)$ has at most $\binom{n}{r}$ non-trivial SCCs.
Let $\mathcal{S}_1,\ldots,\mathcal{S}_\ell$ be a list of the
non-trivial SCCs in $\mathcal{K}(A)$ and write
\begin{gather}
\PP := A \cup \Gen{\mathcal{S}_1} \cup \cdots \cup \Gen{\mathcal{S}_\ell} \, .
\label{eq:all-paths}
\end{gather}

\begin{lemma}\label{lem:structure_paths}
Given $a\in\Gen{A}$ with $\rank(a)=r$, we have $a \in \PP \cup \PP^2
\cup \cdots \cup \PP^\kappa$, where $\kappa=2\binom{n}{r}^2$.
\end{lemma}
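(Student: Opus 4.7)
The plan is to view $a$ as the label of a path in $\mathcal{K}(A)$ and decompose that path into sub-paths each of which lies in $\PP$. First, choose any factorization $a = a_m \cdots a_1$ with each $a_i \in A$. Since every element of $A$ has rank $r$ and rank is submultiplicative, every intermediate product $a_i\cdots a_1$ also has rank exactly $r$ (otherwise $a$ itself would have rank $<r$). Hence $\ker(a_{i+1}) \cap \im(a_i\cdots a_1) = 0$ for every $i$, and setting $U_i := \ker(a_{i+1})$, $V_i := \im(a_i\cdots a_1)$ (with $V_0$ and $U_m$ chosen so as to complete the endpoints into legitimate vertices) yields a path
\[ \pi : (U_0,V_0) \xrightarrow{a_1} (U_1,V_1) \xrightarrow{a_2} \cdots \xrightarrow{a_m}(U_m,V_m) \]
in $\mathcal{K}(A)$ whose label is $a$.

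The key structural observation is that each non-trivial SCC of $\mathcal{K}(A)$ is visited by $\pi$ as a single contiguous sub-path. Indeed, if two vertices $(U_i,V_i)$ and $(U_j,V_j)$ with $i<j$ lie in a common non-trivial SCC $\mathcal{S}$, then composing the sub-path of $\pi$ from $(U_i,V_i)$ to $(U_j,V_j)$ with any return path inside $\mathcal{S}$ produces a cycle containing every intermediate vertex, forcing all of them into $\mathcal{S}$. Combined with Proposition~\ref{prop:bound_number_nontrivial_sccs}, this yields at most $k \leq \ell \leq \binom{n}{r}$ maximal ``SCC-blocks'' $B_1,\ldots,B_k$ and a decomposition
\[ \pi = \pi_0 \cdot B_1 \cdot \pi_1 \cdot B_2 \cdots B_k \cdot \pi_k \]
in which each ``between'' segment $\pi_i$ has every internal vertex outside every non-trivial SCC. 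Each $B_j$ labels an element of $\Gen{\mathcal{S}_{i_j}} \subseteq \PP$, contributing one factor, and each edge of $\pi_i$ labels an element of $A \subseteq \PP$, contributing one factor per edge.

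It remains to bound the total number of factors. By Proposition~\ref{prop:long_path_cross_scc}, any sub-path of length $2\binom{n}{r}$ contains a non-trivial SCC vertex; applied to each $\pi_i$ this forces the length of $\pi_i$ to be $O(\binom{n}{r})$, since a longer internal run of non-SCC vertices would contradict that proposition. Summing the contributions---at most $k+1 \leq \binom{n}{r}+1$ between-segments of length at most roughly $2\binom{n}{r}$ each, plus $k \leq \binom{n}{r}$ SCC-block factors---yields a total of at most $\kappa = 2\binom{n}{r}^2$ factors in $\PP$, so $a \in \PP \cup \PP^2 \cup \cdots \cup \PP^\kappa$.

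The main obstacle is the tight constant rather than the structural decomposition (which is dictated by the DAG of SCCs and is otherwise straightforward). A na\"ive addition of the bounds coming from Propositions~\ref{prop:bound_number_nontrivial_sccs} and~\ref{prop:long_path_cross_scc} gives $2\binom{n}{r}^2 + O(\binom{n}{r})$ rather than exactly $2\binom{n}{r}^2$, so landing on the nose requires a small amount of care---e.g.\ sharpening the bound on the prefix and suffix between-segments (which are bounded by paths having only one SCC-endpoint rather than two) or trading off the number of visited SCCs against the total length of intervening between-segments.
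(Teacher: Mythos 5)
Your proof follows the paper's own strategy almost line for line: realize $a$ as the label of a path in $\mathcal{K}(A)$, partition the vertices into maximal blocks by SCC, observe that each non-trivial SCC (there are at most $\binom{n}{r}$ by Proposition~\ref{prop:bound_number_nontrivial_sccs}) is visited as a single contiguous block, invoke Proposition~\ref{prop:long_path_cross_scc} to bound runs of trivial singleton blocks by $2\binom{n}{r}$, and factor into SCC-blocks (labels in $\Gen{\mathcal{S}_i}$) and single connecting edges (labels in $A$). Your closing remark that a naive tally yields $2\binom{n}{r}^2 + O\bigl(\binom{n}{r}\bigr)$ rather than exactly $\kappa = 2\binom{n}{r}^2$ is well taken: the paper's own proof is equally loose here (its claims that there are ``at most $\kappa$ blocks in total'' and that the number of factors equals the number of blocks both overshoot by lower-order terms), but the discrepancy only affects the stated constant, not the correctness of Proposition~\ref{prop:compute-closure-fixed-rank}, for which any explicit bound polynomial in $\binom{n}{r}$ suffices.
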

\begin{proof}
Suppose that $a$ is the label of a path
\begin{gather}
 (U_0,V_0) \xrightarrow{a_1} (U_1,V_1) \xrightarrow{a_2} (U_2,V_2)
  \xrightarrow{a_3} \ldots \xrightarrow{a_m} (U_m,V_m)
\end{gather}
 in $\mathcal{K}(A)$.  The vertices along this path can be partitioned
 into maximal blocks of contiguous vertices all lying in the same SCC
 of $\mathcal{K}(A)$.  By Proposition~\ref{prop:long_path_cross_scc}
 there are at most $\binom{n}{r}$ such blocks corresponding to
 non-trivial SCCs.  The remaining blocks, corresponding to trivial
 SCCs, are singletons.  By Proposition~\ref{prop:long_path_cross_scc}
 there can be at most $2\binom{n}{r}$ consecutive blocks corresponding to
 trivial SCCs.  Thus there are at most $\kappa=2\binom{n}{r}^2$ blocks
 in total.

Now we can factor the path into single edges that connect vertices in
different blocks and sub-paths all of whose vertices lie in the same
block.  There are at most $\kappa$ such factors (the same as
the number of blocks) and the label of each factor lies in the set
$\PP$ defined in \eqref{eq:all-paths}.  This completes the proof.
\end{proof}

Let $R_r=\set{x\in M_n(\Cx):\rank(x)=r}$ which is a constructible set, and
$R_{<r}=\set{x\in M_n(\Cx):\rank(x)<r}$ which is closed.

\begin{proposition}
Let $A\subseteq M_n(\Cx)$ be a constructible set of matrices, all of rank $r$.
Then we can compute $\Zcl{\Gen{A}}\cap R_r$ from $A$.
\label{prop:compute-closure-fixed-rank}
\end{proposition}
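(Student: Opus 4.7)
The plan is to combine the structural decomposition of Lemma~\ref{lem:structure_paths} with the computable description of the SCC-semigroups furnished by Proposition~\ref{prop:closure_of_scc}. First I observe that every element of $\Gen{A}$ has rank at most $r$, and those of rank strictly less than $r$ lie in the Zariski-closed set $R_{<r}$; since $R_{<r}\cap R_r=\emptyset$, one obtains $\Zcl{\Gen{A}} \cap R_r = \Zcl{\Gen{A}\cap R_r}\cap R_r$. The rank-$r$ elements of $\Gen{A}$ are precisely the labels of paths in $\mathcal{K}(A)$, so by Lemma~\ref{lem:structure_paths} they coincide with $\bigcup_{j=1}^{\kappa}\PP^j$ where $\PP=A\cup \Gen{\mathcal{S}_1}\cup\cdots\cup\Gen{\mathcal{S}_\ell}$ and $\mathcal{S}_1,\ldots,\mathcal{S}_\ell$ are the non-trivial SCCs of $\mathcal{K}(A)$. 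Consequently it is enough to replace each $\Gen{\mathcal{S}_i}$ by a computable constructible set $T_i$ sandwiched between $\Gen{\mathcal{S}_i}$ and $\Zcl{\Gen{A}}$: with $\tilde{\PP} := A\cup T_1\cup\cdots\cup T_\ell$ in place of $\PP$, the Zariski closures $\Zcl{\bigcup_j\tilde{\PP}^j}$ and $\Zcl{\bigcup_j\PP^j}$ coincide, and the intersection with $R_r$ is constructible and computable via the tools recalled in Section~\ref{algorithmic_manipulation}.

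For each non-trivial SCC $\mathcal{S}_i$, Proposition~\ref{prop:closure_of_scc} dictates the natural choice $T_i := B_i\cdot\Zcl{\Gen{B_{*,i}}}\cdot B_i$, where $B_i$ is the set of labels of paths of length at most $\binom{n}{r}+2$ in $\mathcal{S}_i^+$ and $B_{*,i}$ the set of self-loop labels at a chosen base vertex of length at most $2\binom{n}{r}+3$. Since the pseudo-inverse operation is a regular map of the edge data once kernel and image have been fixed, both $B_i$ and $B_{*,i}$ will be constructible sets of matrices provided $\mathcal{S}_i$ itself can be described constructibly. Moreover all elements of $B_{*,i}$ share a common kernel $U_{*,i}$ and image $V_{*,i}$, placing us in the setting of Proposition~\ref{prop:two-closures-are-us}, which thereby supplies an effective presentation of $\Zcl{\Gen{B_{*,i}}}$ by reduction to a constructible subgroup of $\GL_r(\Cx)$ and Theorem~\ref{th:closure_definable_subgroup}.

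The main obstacle is that $\mathcal{K}(A)$ may have infinitely many vertices, so the non-trivial SCCs must be enumerated implicitly. Here I would lean on Propositions~\ref{prop:bound_number_nontrivial_sccs} and~\ref{prop:short_path_in_scc}: there are at most $\binom{n}{r}$ non-trivial SCCs, and a vertex $(U,V)$ belongs to one of them iff it lies on a self-loop of length at most $\binom{n}{r}+1$. Viewing the maps $a\mapsto\ker(a)$ and $a\mapsto\im(a)$ as regular maps into the Grassmannians (via the Plücker embeddings into $\ExtAlg[n-r]{\Cx^n}$ and $\ExtAlg[r]{\Cx^n}$), the vertex set of all non-trivial SCCs is the constructible image, for $m\leq\binom{n}{r}+1$, of the set of tuples $(a_1,\ldots,a_m)\in A^m$ satisfying $\ker(a_{i+1})\cap\im(a_i)=0$ for $i<m$ and $\ker(a_1)\cap\im(a_m)=0$, under $(a_1,\ldots,a_m)\mapsto(\ker(a_1),\im(a_m))$. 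SCC-equivalence is likewise expressible by the existence of bounded-length paths in both directions and is therefore a constructible relation on this vertex set; since it has only finitely many equivalence classes, we can effectively pick a base vertex with algebraic coordinates in each class (for instance using the procedure of~\cite[Chapter 12.6]{BPR06}), and thereby read off constructible presentations of the $B_i$ and $B_{*,i}$, closing the computation.
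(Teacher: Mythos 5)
Your proof follows essentially the same route as the paper's: use Proposition~\ref{prop:closure_of_scc} to replace each $\Gen{\mathcal{S}_i}$ by a computable constructible set sandwiched in $\Zcl{\Gen{A}}$, cover $\Gen{A}\cap R_r$ via the bounded products of Lemma~\ref{lem:structure_paths}, and finish with the rank‑stratification identity $\Zcl{\Gen{A}}\cap R_r=\Zcl{\Gen{A}\cap R_r}\cap R_r$; you also fill in the SCC‑enumeration step that the paper leaves implicit. The only (harmless) imprecision is the assertion that $\Zcl{\bigcup_j\tilde{\PP}^j}$ and $\Zcl{\bigcup_j\PP^j}$ coincide outright---they need only (and do) agree after intersecting with $R_r$, which is the inclusion chain the argument actually uses.
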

\begin{proof}
By Proposition~\ref{prop:closure_of_scc}, for $i=1,\ldots,\ell$ we can
compute a constructible set $E_i \subseteq M_n(\Cx)$ such that
$\Gen{\mathcal{S}_i} \subseteq E_i \subseteq \Zcl{\Gen{A}}$.  Writing
$E:= A \cup E_1 \cup \ldots \cup E_\ell$, we have $\PP \subseteq E
\subseteq \Zcl{\Gen{A}}$.

By Lemma~\ref{lem:structure_paths} we have $\Gen{A}\cap R_r\subseteq X$, where $X := E
\cup E^2 \cup \ldots \cup E^{2\binom{n}{r}^2}$.  Now
\begin{align*}
\Gen{A}\cap R_r &\subseteq X\subseteq\Zcl{\Gen{A}}\\
\Zcl{\Gen{A}\cap R_r}
    &\subseteq \Zcl{X}\subseteq\Zcl{\Gen{A}}\\
\Zcl{\Gen{A}\cap R_r}\cap R_r
    &\subseteq \Zcl{X}\cap R_r\subseteq \Zcl{\Gen{A}}\cap R_r.
\end{align*}
We claim that
\begin{equation}\label{eq:compute-closure-fixed-rank:rank_closure}
\Zcl{\Gen{A}\cap R_r}\cap R_r=\Zcl{\Gen{A}}\cap R_r
\end{equation}
which shows that
\[\Zcl{\Gen{A}}\cap R_r=\Zcl{X}\cap R_r\]
is constructible and computable. It remains to see why \eqref{eq:compute-closure-fixed-rank:rank_closure}
holds. Since all matrices in $A$ have rank $r$, all matrices in $\Gen{A}$ have rank $r$ or less, thus
\begin{align*}
\Gen{A}&=\big(\Gen{A}\cap R_r\big)\cup\big(\Gen{A}\cap R_{<r}\big)\\
\Zcl{\Gen{A}}&=\Zcl{\Gen{A}\cap R_r}\cup\Zcl{\Gen{A}\cap R_{<r}}\\
\Zcl{\Gen{A}}\cap R_r&=\left(\Zcl{\Gen{A}\cap R_r}\cap R_r\right)\cup\underbrace{\left(\Zcl{\Gen{A}\cap R_{<r}}\cap R_r\right)}_{=\varnothing}.
\end{align*}
Indeed, $\Gen{A}\cap R_{<r}\subseteq R_{<r}$
thus $\Zcl{\Gen{A}\cap R_{<r}}\subseteq R_{<r}$ because $R_{<r}$ is closed, and $R_{<r}\cap R_r=\varnothing$.

\end{proof}

%

\subsection{Computing the Closure}
\label{mainsection}
We now present the main result of the paper.
\begin{theorem}
  Given a constructible set of matrices $A\subseteq M_n(\Cx)$, one can
  compute $\Zcl{\Gen{A}}$---the Zariski closure of the semigroup
  generated by $A$.
\label{thm:main}
\end{theorem}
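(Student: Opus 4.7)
The plan is to induct on $r:=\max\{\rank(a):a\in A\}$, applying Proposition~\ref{prop:compute-closure-fixed-rank} to the top rank stratum and reducing what remains to the same problem at strictly smaller maximum rank.  When $r=0$ there is nothing to do.  For the inductive step, split $A=A_r\sqcup A_{<r}$ with $A_r:=A\cap R_r$ and $A_{<r}:=A\cap R_{<r}$, both constructible.  Any product in $\Gen{A}$ involving an $A_{<r}$-factor has rank $<r$, so $\Gen{A}\cap R_r=\Gen{A_r}\cap R_r$; since $R_{<r}$ is closed, this gives $\Zcl{\Gen{A}}\cap R_r=\Zcl{\Gen{A_r}}\cap R_r$, which we compute as $C_r$ by feeding $A_r$ into Proposition~\ref{prop:compute-closure-fixed-rank}.

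The remaining task is to compute $\Zcl{\Gen{A}}\cap R_{<r}$, and for this the plan is to construct a constructible set $A'\subseteq R_{<r}$ (of maximum rank strictly below $r$) such that $\Zcl{\Gen{A'}}=\Zcl{\Gen{A}}\cap R_{<r}$.  Setting $C_r^\dagger:=C_r\cup\{I_n\}$, I would take
\[A':=C_r^\dagger\cdot\bigl(A_{<r}\cup((C_r\cdot A_r)\cap R_{<r})\bigr)\cdot C_r^\dagger.\]
This set is constructible by Chevalley's theorem applied to matrix multiplication, lies in $R_{<r}$ because its middle factor does, and sits inside $\Zcl{\Gen{A}}$ because the latter is a Zariski closed semigroup containing $C_r$, $A_r$, and $A_{<r}$; hence $\Zcl{\Gen{A'}}\subseteq\Zcl{\Gen{A}}\cap R_{<r}$.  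For the reverse inclusion it suffices to show $\Gen{A}\cap R_{<r}\subseteq\Gen{A'}$, because closing on both sides then yields $\Zcl{\Gen{A}}\cap R_{<r}=\Zcl{\Gen{A}\cap R_{<r}}\subseteq\Zcl{\Gen{A'}}$ (the equality uses that $R_{<r}$ is closed and that the complementary closure $\Zcl{\Gen{A}\cap R_r}$ contributes nothing outside $R_r$).

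I would establish $\Gen{A}\cap R_{<r}\subseteq\Gen{A'}$ by an inner induction on the length $k$ of a product $\pi=a_1\cdots a_k\in\Gen{A}\cap R_{<r}$, locating the first ``anomaly'' position $j$---either the first appearance of an $A_{<r}$-generator, or the first index at which an all-$A_r$ prefix drops below rank $r$.  Before $j$, the prefix lies in $\Gen{A_r}\cap R_r\cup\{I\}\subseteq C_r^\dagger$, so in the first case $a_1\cdots a_j\in C_r^\dagger\cdot A_{<r}\subseteq A'$, and in the second case $a_1\cdots a_j\in(C_r\cdot A_r)\cap R_{<r}\subseteq A'$.  The tail is strictly shorter: if it lies in $\Gen{A_r}\cap R_r\subseteq C_r$, absorb it into the right-hand $C_r^\dagger$-factor of the current $A'$-element; if it lies in $\Gen{A}\cap R_{<r}$, apply the inner induction.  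Once $\Gen{A}\cap R_{<r}\subseteq\Gen{A'}$ is established, invoke the outer inductive hypothesis on the constructible set $A'$ (of maximum rank strictly below $r$) to obtain $\Zcl{\Gen{A'}}$, and output $\Zcl{\Gen{A}}=C_r\cup\Zcl{\Gen{A'}}$.

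The main obstacle is the inner induction showing $\Gen{A}\cap R_{<r}\subseteq\Gen{A'}$.  The subtlety is that after the first anomaly the tail is an arbitrary product in $\Gen{A}$ that can itself undergo further rank drops---possibly within a run of $A_r$-generators that lives entirely in the rank-$<r$ regime---so one must be careful to route each interior rank-drop event through a factor of the form $(C_r\cdot A_r)\cap R_{<r}$, and to check that each rank-$r$ run separating consecutive anomalies can always be absorbed into one of the $C_r^\dagger$-brackets that sandwich the middle factor of $A'$.
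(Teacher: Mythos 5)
Your proposal follows the same broad strategy as the paper's proof: induct on the maximum rank $r$, compute the rank-$r$ stratum of the closure via Proposition~\ref{prop:compute-closure-fixed-rank}, package up the lower-rank remnants into a constructible set of strictly smaller maximum rank, and recurse.  Your set $A'=C_r^\dagger\cdot(A_{<r}\cup((C_r\cdot A_r)\cap R_{<r}))\cdot C_r^\dagger$ plays the role of the paper's $C=\set{a\in A\cup BA\cup AB\cup BAB:\rank(a)<r}$ (with $B=C_r$), and the prefix-decomposition/inner-induction argument is essentially the one the paper gives.

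There is, however, a genuine gap.  You output $C_r\cup\Zcl{\Gen{A'}}$, but $C_r=\Zcl{\Gen{A_r}}\cap R_r$ is a \emph{constructible} set, not a closed one: $R_r$ is open, not closed, so $C_r$ is locally closed and its Zariski closure may contain points of rank less than $r$ that are not in $\Zcl{\Gen{A'}}$.  The justifying sentence---that ``the complementary closure $\Zcl{\Gen{A}\cap R_r}$ contributes nothing outside $R_r$''---is exactly where this fails: it would require $\Zcl{\Gen{A}\cap R_r}\subseteq R_r$, which is false in general, because $R_r$ is not closed.  (Note the asymmetry with the paper's equation~\eqref{eq:compute-closure-fixed-rank:rank_closure}, which works precisely because $R_{<r}$ \emph{is} closed.)  Concretely, take $n=2$, $A=\set{\diag(2,1)}$: then $r=2$, $A_{<2}=\emptyset$, $C_r=\set{\diag(t,1):t\neq 0}$, $C_r\cdot A_r=C_r$ so $(C_r\cdot A_r)\cap R_{<2}=\emptyset$, whence $A'=\emptyset$ and your output is $C_r$, which omits the rank-$1$ point $\diag(0,1)\in\Zcl{\Gen{A}}$.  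So the asserted equality $\Zcl{\Gen{A}}\cap R_{<r}=\Zcl{\Gen{A}\cap R_{<r}}$ is false, and $C_r\cup\Zcl{\Gen{A'}}$ need not be closed, hence need not equal $\Zcl{\Gen{A}}$.  The fix is the one the paper uses: output $\Zcl{C_r}\cup\Zcl{\Gen{A'}}$; the inclusion $\Zcl{C_r}\subseteq\Zcl{\Gen{A}}$ is immediate, and the left-to-right inclusion then follows from $\Gen{A}\cap R_r\subseteq C_r$ together with your $\Gen{A}\cap R_{<r}\subseteq\Gen{A'}$, without ever needing the false identity.
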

\begin{proof}
  The proof is by induction on the maximum rank $r$ of the matrices in
  $A$.  The base case $r=0$ is trivial.  For the induction step, write
  $A_r:=\set{a\in A:\rank(a)=r}$ for the subset of matrices in $A$ of
  maximum rank and $B:=\{ a \in \Zcl{\Gen{A_r}} : \rank(a)=r\}$.  Now
  $B$ is computable by
  Proposition~\ref{prop:compute-closure-fixed-rank}.

  We claim that $\Zcl{\Gen{A}} = \Zcl{B} \cup \Zcl{\Gen{C}}$, where
  \[ C = \set{a \in A\cup BA \cup AB \cup BAB : \rank(a) < r} \, . \]
  The theorem follows from the claim since $\Zcl{\Gen{C}}$ is
  computable by the induction hypothesis.

  It remains to prove the claim.  For the right-to-left inclusion
  notice that since $A,B\subseteq \Zcl{\Gen{A}}$ and $\Zcl{\Gen{A}}$
  is a Zariski-closed semigroup, then $\Zcl{\Gen{A}}$ contains both
  $\Zcl{B}$ and $\Zcl{\Gen{C}}$.

  For the left-to-right inclusion it suffices to show that
  ${\Gen{A}} \subseteq \Zcl{B} \cup \Zcl{\Gen{C}}$.  To this end,
  consider a non-empty product $a:=a_1a_2 \cdots a_m$, where
  $a_1,\ldots,a_m \in A$.  Suppose first that $\rank(a)=r$. Then of
  course 
  $a_1,\ldots,a_m \in A_r$ and hence $a\in B$.  Suppose now that that
  $\rank(a)<r$.  We show that $a \in \Gen{C}$ by induction on $m$.
  Let $a_1 \cdots a_\ell$ be a prefix of minimum length that has rank
  less than $r$.  Clearly such a prefix lies in $A \cup BA$.  Moreover
  the corresponding suffix $a_{\ell+1} \cdots a_m$ is either empty,
  has rank $r$ (and hence is in $B$), or has rank $<r$ and
  hence is in $\Gen{C}$ by induction.  In all cases we have that
  $a \in \Gen{C}$.
\end{proof}

\begin{corollary}
  Given a constructible set of matrices $A\subseteq M_n(\R)$, one can
  compute $\Zcl[M_n(\R)]{\Gen{A}}$---the real Zariski closure of the semigroup
  generated by $A$.
\label{cor:main}
\end{corollary}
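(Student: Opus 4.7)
The plan is to reduce directly to Theorem~\ref{thm:main} via the identity $\Zcl[\R]{S} = \Zcl[\Cx]{S} \cap \R^N$ recorded for $S \subseteq \R^N$ in Section~\ref{algorithmic_manipulation} under ``Real Zariski Closure''. The structure of the argument is entirely analogous to the final step in the proof of the affine-program theorem in Section~\ref{sec:affine-programs}, where the same pattern is applied to pass from the complex to the real Zariski closure of the collecting semantics.

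First I would view $A$ as a constructible subset of $M_n(\Cx)$ via the inclusion $M_n(\R) \hookrightarrow M_n(\Cx)$ and apply Theorem~\ref{thm:main} to obtain the complex Zariski closure $\Zcl[\Cx]{\Gen{A}}$, represented as an ideal $I$ in the coordinate ring of $M_n(\Cx)$. The point licensing the use of the reduction identity is that, since every element of $A$ is a real matrix and matrix multiplication preserves realness, the semigroup $\Gen{A}$ is itself contained in $M_n(\R)$; hence
\[ \Zcl[\R]{\Gen{A}} = \Zcl[\Cx]{\Gen{A}} \cap M_n(\R). \]

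The second and final step is to compute this intersection using the effective procedure of Section~\ref{algorithmic_manipulation}: the ideal generated by $\{p_\R, p_{i\R} : p \in I\}$, where $p_\R$ and $p_{i\R}$ are the real and imaginary parts of $p$, represents $\Zcl[\R]{\Gen{A}}$. Since a Gr\"obner basis yields a finite generating set for $I$, this is effective and the algorithm terminates.

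There is no substantive obstacle in the proof; the only subtlety worth flagging is that the hypothesis $A \subseteq M_n(\R)$ is essential. It is precisely this hypothesis that forces $\Gen{A} \subseteq M_n(\R)$ and makes the identity $\Zcl[\R]{S} = \Zcl[\Cx]{S} \cap \R^N$ applicable with $S = \Gen{A}$. Were $A$ merely a complex constructible set meeting $M_n(\R)$, it could contain non-real matrices whose products happened to be real, so that $\Zcl[\Cx]{\Gen{A}} \cap M_n(\R)$ would in general be strictly larger than $\Zcl[\R]{\Gen{A \cap M_n(\R)}}$ and the reduction would fail.
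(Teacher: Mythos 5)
Your proof is correct and follows exactly the same route as the paper's: compute $\Zcl[\Cx]{\Gen{A}}$ via Theorem~\ref{thm:main} and intersect with $M_n(\R)$, using the identity $\Zcl[\R]{S}=\Zcl[\Cx]{S}\cap M_n(\R)$ for $S\subseteq M_n(\R)$ and the effective intersection from Section~\ref{algorithmic_manipulation}. The only difference is that you spell out the (correct and worth-noting) observation that $A\subseteq M_n(\R)$ forces $\Gen{A}\subseteq M_n(\R)$, which the paper leaves implicit.
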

\begin{proof}
  For any set $X\subseteq\R^n$, we have $\Zcl[M_n(\R)]{X} = \Zcl[M_n(\Cx)]{X} \cap M_n(\R)$ (see
  Secs.~\ref{sub:alg-geo} and \ref{algorithmic_manipulation}).
\end{proof}

\section{Conclusion}
The main technical contribution of this paper is a procedure to compute the
Zariski closure of the semigroup generated by a given finite set of
rational square matrices of the same dimension.  We have not attempted
to analyse the complexity of this procedure.  Such an analysis would
depend on, among other things, the various Gr\"{o}bner basis
manipulations that we perform and the algorithm of~\cite{DerksenJK05}
for computing the Zariski closure of a finitely generated group of
invertible matrices, which we use as a subroutine.  The task of
computing Gr\"{o}bner bases is known to be expensive, while 
there has
been no complexity analysis of the algorithm of~\cite{DerksenJK05} to
the best of our knowledge.
It may be that the techniques developed in this paper can be used to
obtain computable bounds on the degree of the generators of an ideal
representing the Zariski closure of a given finitely generated matrix
semigroup.  If this were the case then one could compute a set of
generators essentially using only linear algebra (in the spirit of the
algorithm of~\cite{Muller-OlmS04} for computing polynomial invariants
of a given maximum degree for a given affine program).

\bibliography{literature}


\begin{thebibliography}{42}


\ifx \showCODEN    \undefined \def \showCODEN     #1{\unskip}     \fi
\ifx \showDOI      \undefined \def \showDOI       #1{#1}\fi
\ifx \showISBNx    \undefined \def \showISBNx     #1{\unskip}     \fi
\ifx \showISBNxiii \undefined \def \showISBNxiii  #1{\unskip}     \fi
\ifx \showISSN     \undefined \def \showISSN      #1{\unskip}     \fi
\ifx \showLCCN     \undefined \def \showLCCN      #1{\unskip}     \fi
\ifx \shownote     \undefined \def \shownote      #1{#1}          \fi
\ifx \showarticletitle \undefined \def \showarticletitle #1{#1}   \fi
\ifx \showURL      \undefined \def \showURL       {\relax}        \fi
\providecommand\bibfield[2]{#2}
\providecommand\bibinfo[2]{#2}
\providecommand\natexlab[1]{#1}
\providecommand\showeprint[2][]{arXiv:#2}

\bibitem[\protect\citeauthoryear{Almagor, Chistikov, Ouaknine, and
  Worrell}{Almagor et~al\mbox{.}}{2018}]%
        {ACOW18}
\bibfield{author}{\bibinfo{person}{S. Almagor}, \bibinfo{person}{D. Chistikov},
  \bibinfo{person}{J. Ouaknine}, {and} \bibinfo{person}{J. Worrell}.}
  \bibinfo{year}{2018}\natexlab{}.
\newblock \showarticletitle{O-Minimal Invariants for Linear Loops}. In
  \bibinfo{booktitle}{{\em 45th International Colloquium on Automata, Languages
  and Programming, {ICALP} 2018, July 9-13, Prague, Czech Republic}} {\em
  (\bibinfo{series}{LIPIcs})}. \bibinfo{publisher}{Schloss Dagstuhl -
  Leibniz-Zentrum fuer Informatik}.
\newblock


\bibitem[\protect\citeauthoryear{Babai, Beals, Cai, Ivanyos, and Luks}{Babai
  et~al\mbox{.}}{1996}]%
        {BBCIL96}
\bibfield{author}{\bibinfo{person}{L. Babai}, \bibinfo{person}{R. Beals},
  \bibinfo{person}{J.-Y. Cai}, \bibinfo{person}{G. Ivanyos}, {and}
  \bibinfo{person}{E.~M. Luks}.} \bibinfo{year}{1996}\natexlab{}.
\newblock \showarticletitle{Multiplicative Equations over Commuting Matrices}.
  In \bibinfo{booktitle}{{\em Proceedings of the Seventh Annual {ACM-SIAM}
  Symposium on Discrete Algorithms, 28-30 January 1996, Atlanta, Georgia.}}
  \bibinfo{pages}{498--507}.
\newblock


\bibitem[\protect\citeauthoryear{Basu, Pollack, and Roy}{Basu
  et~al\mbox{.}}{2006}]%
        {BPR06}
\bibfield{author}{\bibinfo{person}{S. Basu}, \bibinfo{person}{R. Pollack},
  {and} \bibinfo{person}{M.-F. Roy}.} \bibinfo{year}{2006}\natexlab{}.
\newblock \bibinfo{booktitle}{{\em Algorithms in Real Algebraic Geometry\/}
  (\bibinfo{edition}{2nd} ed.)}.
\newblock \bibinfo{publisher}{Springer}.
\newblock


\bibitem[\protect\citeauthoryear{Becker and Weispfenning}{Becker and
  Weispfenning}{1993}]%
        {BeckerW93}
\bibfield{author}{\bibinfo{person}{T. Becker} {and} \bibinfo{person}{V.
  Weispfenning}.} \bibinfo{year}{1993}\natexlab{}.
\newblock \bibinfo{booktitle}{{\em Gr\"{o}bner bases: a computational approach
  to commutative algebra}}. \bibinfo{series}{Graduate Texts in Mathematics},
  Vol.~\bibinfo{volume}{141}.
\newblock \bibinfo{publisher}{Springer-Verlag}, \bibinfo{address}{New York}.
\newblock


\bibitem[\protect\citeauthoryear{Blondel, Jeandel, Koiran, and Portier}{Blondel
  et~al\mbox{.}}{2005}]%
        {BJKP05}
\bibfield{author}{\bibinfo{person}{V.~D. Blondel}, \bibinfo{person}{E.
  Jeandel}, \bibinfo{person}{P. Koiran}, {and} \bibinfo{person}{N. Portier}.}
  \bibinfo{year}{2005}\natexlab{}.
\newblock \showarticletitle{Decidable and Undecidable Problems about Quantum
  Automata}.
\newblock \bibinfo{journal}{{\em {SIAM} J. Comput.\/}} \bibinfo{volume}{34},
  \bibinfo{number}{6} (\bibinfo{year}{2005}), \bibinfo{pages}{1464--1473}.
\newblock


\bibitem[\protect\citeauthoryear{Bradley and Manna}{Bradley and Manna}{2007}]%
        {BradleyM07}
\bibfield{author}{\bibinfo{person}{A.~R. Bradley} {and} \bibinfo{person}{Z.
  Manna}.} \bibinfo{year}{2007}\natexlab{}.
\newblock \bibinfo{booktitle}{{\em The calculus of computation - decision
  procedures with applications to verification}}.
\newblock \bibinfo{publisher}{Springer}.
\newblock


\bibitem[\protect\citeauthoryear{Cachera, Jensen, Jobin, and Kirchner}{Cachera
  et~al\mbox{.}}{2014}]%
        {CacheraJJK14}
\bibfield{author}{\bibinfo{person}{D. Cachera}, \bibinfo{person}{T.~P. Jensen},
  \bibinfo{person}{A. Jobin}, {and} \bibinfo{person}{F. Kirchner}.}
  \bibinfo{year}{2014}\natexlab{}.
\newblock \showarticletitle{Inference of polynomial invariants for imperative
  programs: {A} farewell to Gr{\"{o}}bner bases}.
\newblock \bibinfo{journal}{{\em Sci. Comput. Program.\/}}
  \bibinfo{volume}{93} (\bibinfo{year}{2014}), \bibinfo{pages}{89--109}.
\newblock


\bibitem[\protect\citeauthoryear{Col{\'{o}}n}{Col{\'{o}}n}{2007}]%
        {Colon07}
\bibfield{author}{\bibinfo{person}{M. Col{\'{o}}n}.}
  \bibinfo{year}{2007}\natexlab{}.
\newblock \showarticletitle{Polynomial approximations of the relational
  semantics of imperative programs}.
\newblock \bibinfo{journal}{{\em Sci. Comput. Program.\/}}
  \bibinfo{volume}{64}, \bibinfo{number}{1} (\bibinfo{year}{2007}),
  \bibinfo{pages}{76--96}.
\newblock


\bibitem[\protect\citeauthoryear{Cousot and Cousot}{Cousot and Cousot}{1977}]%
        {CC77}
\bibfield{author}{\bibinfo{person}{P. Cousot} {and} \bibinfo{person}{R.
  Cousot}.} \bibinfo{year}{1977}\natexlab{}.
\newblock \showarticletitle{Abstract Interpretation: {A} Unified Lattice Model
  for Static Analysis of Programs by Construction or Approximation of
  Fixpoints}. In \bibinfo{booktitle}{{\em Conference Record of the Fourth {ACM}
  Symposium on Principles of Programming Languages, Los Angeles, California,
  USA, January 1977}}. \bibinfo{pages}{238--252}.
\newblock


\bibitem[\protect\citeauthoryear{Cousot and Halbwachs}{Cousot and
  Halbwachs}{1978}]%
        {CH78}
\bibfield{author}{\bibinfo{person}{P. Cousot} {and} \bibinfo{person}{N.
  Halbwachs}.} \bibinfo{year}{1978}\natexlab{}.
\newblock \showarticletitle{Automatic Discovery of Linear Restraints Among
  Variables of a Program}. In \bibinfo{booktitle}{{\em Conference Record of the
  Fifth Annual {ACM} Symposium on Principles of Programming Languages, Tucson,
  Arizona, USA, January 1978}}. \bibinfo{pages}{84--96}.
\newblock


\bibitem[\protect\citeauthoryear{Cox, Little, and O'Shea}{Cox
  et~al\mbox{.}}{1997}]%
        {CoxLS}
\bibfield{author}{\bibinfo{person}{D.~A. Cox}, \bibinfo{person}{J.~B. Little},
  {and} \bibinfo{person}{D. O'Shea}.} \bibinfo{year}{1997}\natexlab{}.
\newblock \bibinfo{booktitle}{{\em {Ideals, varieties, and algorithms: an
  introduction to computational algebraic geometry and commutative algebra}\/}
  (\bibinfo{edition}{2nd} ed.)}.
\newblock \bibinfo{publisher}{Springer-Verlag}.
\newblock


\bibitem[\protect\citeauthoryear{de~Oliveira, Bensalem, and
  Prevosto}{de~Oliveira et~al\mbox{.}}{2016a}]%
        {OliveiraBP16}
\bibfield{author}{\bibinfo{person}{S. de Oliveira}, \bibinfo{person}{S.
  Bensalem}, {and} \bibinfo{person}{V. Prevosto}.}
  \bibinfo{year}{2016}\natexlab{a}.
\newblock \showarticletitle{Polynomial Invariants by Linear Algebra}. In
  \bibinfo{booktitle}{{\em Automated Technology for Verification and Analysis -
  14th International Symposium, {ATVA} 2016, Chiba, Japan, October 17-20, 2016,
  Proceedings}}, Vol.~\bibinfo{volume}{9938}. \bibinfo{pages}{479--494}.
\newblock


\bibitem[\protect\citeauthoryear{de~Oliveira, Bensalem, and
  Prevosto}{de~Oliveira et~al\mbox{.}}{2016b}]%
        {OBP16}
\bibfield{author}{\bibinfo{person}{S. de Oliveira}, \bibinfo{person}{S.
  Bensalem}, {and} \bibinfo{person}{V. Prevosto}.}
  \bibinfo{year}{2016}\natexlab{b}.
\newblock \showarticletitle{Polynomial Invariants by Linear Algebra}. In
  \bibinfo{booktitle}{{\em Automated Technology for Verification and Analysis -
  14th International Symposium, {ATVA} 2016, Chiba, Japan, October 17-20, 2016,
  Proceedings}}. \bibinfo{pages}{479--494}.
\newblock


\bibitem[\protect\citeauthoryear{Derksen, Jeandel, and Koiran}{Derksen
  et~al\mbox{.}}{2005}]%
        {DerksenJK05}
\bibfield{author}{\bibinfo{person}{H. Derksen}, \bibinfo{person}{E. Jeandel},
  {and} \bibinfo{person}{P. Koiran}.} \bibinfo{year}{2005}\natexlab{}.
\newblock \showarticletitle{Quantum automata and algebraic groups}.
\newblock \bibinfo{journal}{{\em J. Symb. Comput.\/}} \bibinfo{volume}{39},
  \bibinfo{number}{3-4} (\bibinfo{year}{2005}), \bibinfo{pages}{357--371}.
\newblock


\bibitem[\protect\citeauthoryear{Fijalkow, Ohlmann, Ouaknine, Pouly, and
  Worrell}{Fijalkow et~al\mbox{.}}{2017}]%
        {FOOPW17}
\bibfield{author}{\bibinfo{person}{N. Fijalkow}, \bibinfo{person}{P. Ohlmann},
  \bibinfo{person}{J. Ouaknine}, \bibinfo{person}{A. Pouly}, {and}
  \bibinfo{person}{J. Worrell}.} \bibinfo{year}{2017}\natexlab{}.
\newblock \showarticletitle{Semialgebraic Invariant Synthesis for the
  {K}annan-{L}ipton {O}rbit {P}roblem}. In \bibinfo{booktitle}{{\em 34th
  Symposium on Theoretical Aspects of Computer Science, {STACS} 2017, March
  8-11, 2017, Hannover, Germany}}. \bibinfo{pages}{29:1--29:13}.
\newblock


\bibitem[\protect\citeauthoryear{Gulwani and Necula}{Gulwani and
  Necula}{2003}]%
        {GulwaniN03}
\bibfield{author}{\bibinfo{person}{S. Gulwani} {and} \bibinfo{person}{G.~C.
  Necula}.} \bibinfo{year}{2003}\natexlab{}.
\newblock \showarticletitle{Discovering affine equalities using random
  interpretation}. In \bibinfo{booktitle}{{\em Conference Record of {POPL}
  2003: The 30th {SIGPLAN-SIGACT} Symposium on Principles of Programming
  Languages, New Orleans, Louisisana, USA, January 15-17, 2003}}.
  \bibinfo{publisher}{{ACM}}, \bibinfo{pages}{74--84}.
\newblock


\bibitem[\protect\citeauthoryear{Hrushovski}{Hrushovski}{2002}]%
        {HR}
\bibfield{author}{\bibinfo{person}{E. Hrushovski}.}
  \bibinfo{year}{2002}\natexlab{}.
\newblock \showarticletitle{Computing the Galois Group of a Linear Differential
  Equation}. In \bibinfo{booktitle}{{\em Banach Center Publications}} {\em
  (\bibinfo{series}{Differential Galois Theory})}, Vol.~\bibinfo{volume}{58}.
  \bibinfo{publisher}{Institute of Mathematics, Polish Academy of Sciences}.
\newblock


\bibitem[\protect\citeauthoryear{Humenberger, Jaroschek, and
  Kov{\'{a}}cs}{Humenberger et~al\mbox{.}}{2018}]%
        {HumenbergerJK18}
\bibfield{author}{\bibinfo{person}{A. Humenberger}, \bibinfo{person}{M.
  Jaroschek}, {and} \bibinfo{person}{L. Kov{\'{a}}cs}.}
  \bibinfo{year}{2018}\natexlab{}.
\newblock \showarticletitle{Invariant Generation for Multi-Path Loops with
  Polynomial Assignments}. In \bibinfo{booktitle}{{\em Verification, Model
  Checking, and Abstract Interpretation - 19th International Conference,
  {VMCAI} 2018, Los Angeles, CA, USA, January 7-9, 2018, Proceedings}} {\em
  (\bibinfo{series}{Lecture Notes in Computer Science})},
  Vol.~\bibinfo{volume}{10747}. \bibinfo{publisher}{Springer},
  \bibinfo{pages}{226--246}.
\newblock


\bibitem[\protect\citeauthoryear{Jacob}{Jacob}{1977}]%
        {Jacob77}
\bibfield{author}{\bibinfo{person}{G. Jacob}.} \bibinfo{year}{1977}\natexlab{}.
\newblock \showarticletitle{Un Algorithme Calculant le Cardinal, Fini ou
  Infini, des Demi-Groupes de Matrices}.
\newblock \bibinfo{journal}{{\em Theor. Comput. Sci.\/}} \bibinfo{volume}{5},
  \bibinfo{number}{2} (\bibinfo{year}{1977}), \bibinfo{pages}{183--204}.
\newblock


\bibitem[\protect\citeauthoryear{Jacob}{Jacob}{1978}]%
        {Jacob78}
\bibfield{author}{\bibinfo{person}{G. Jacob}.} \bibinfo{year}{1978}\natexlab{}.
\newblock \showarticletitle{La finitude des repr\'{e}sentations linéaires des
  semi-groupes est d\'{e}cidable}.
\newblock \bibinfo{journal}{{\em Journal of Algebra\/}} \bibinfo{volume}{52},
  \bibinfo{number}{2} (\bibinfo{year}{1978}), \bibinfo{pages}{437--459}.
\newblock


\bibitem[\protect\citeauthoryear{Kapur}{Kapur}{2013}]%
        {Kapur13}
\bibfield{author}{\bibinfo{person}{D. Kapur}.} \bibinfo{year}{2013}\natexlab{}.
\newblock \showarticletitle{Elimination Techniques for Program Analysis}. In
  \bibinfo{booktitle}{{\em Programming Logics - Essays in Memory of Harald
  Ganzinger}} {\em (\bibinfo{series}{Lecture Notes in Computer Science})},
  Vol.~\bibinfo{volume}{7797}. \bibinfo{pages}{194--215}.
\newblock


\bibitem[\protect\citeauthoryear{Karr}{Karr}{1976}]%
        {Karr76}
\bibfield{author}{\bibinfo{person}{M. Karr}.} \bibinfo{year}{1976}\natexlab{}.
\newblock \showarticletitle{Affine Relationships Among Variables of a Program}.
\newblock \bibinfo{journal}{{\em Acta Inf.\/}}  \bibinfo{volume}{6}
  (\bibinfo{year}{1976}), \bibinfo{pages}{133--151}.
\newblock


\bibitem[\protect\citeauthoryear{Kincaid, Cyphert, Breck, and Reps}{Kincaid
  et~al\mbox{.}}{2018}]%
        {KCBR18}
\bibfield{author}{\bibinfo{person}{Z. Kincaid}, \bibinfo{person}{J. Cyphert},
  \bibinfo{person}{J. Breck}, {and} \bibinfo{person}{T.~W. Reps}.}
  \bibinfo{year}{2018}\natexlab{}.
\newblock \showarticletitle{Non-linear reasoning for invariant synthesis}.
\newblock \bibinfo{journal}{{\em {PACMPL}\/}} \bibinfo{volume}{2},
  \bibinfo{number}{{POPL}} (\bibinfo{year}{2018}),
  \bibinfo{pages}{54:1--54:33}.
\newblock


\bibitem[\protect\citeauthoryear{Koiran}{Koiran}{2000}]%
        {Koiran00}
\bibfield{author}{\bibinfo{person}{P. Koiran}.}
  \bibinfo{year}{2000}\natexlab{}.
\newblock \showarticletitle{The Complexity of Local Dimensions for
  Constructible Sets}.
\newblock \bibinfo{journal}{{\em J. Complexity\/}} \bibinfo{volume}{16},
  \bibinfo{number}{1} (\bibinfo{year}{2000}), \bibinfo{pages}{311--323}.
\newblock


\bibitem[\protect\citeauthoryear{Kov{\'{a}}cs}{Kov{\'{a}}cs}{2008}]%
        {Kovacs08}
\bibfield{author}{\bibinfo{person}{L. Kov{\'{a}}cs}.}
  \bibinfo{year}{2008}\natexlab{}.
\newblock \showarticletitle{Reasoning Algebraically About P-Solvable Loops}. In
  \bibinfo{booktitle}{{\em Tools and Algorithms for the Construction and
  Analysis of Systems, 14th International Conference, {TACAS} 2008, Held as
  Part of the Joint European Conferences on Theory and Practice of Software,
  {ETAPS} 2008, Budapest, Hungary, March 29-April 6, 2008. Proceedings}} {\em
  (\bibinfo{series}{Lecture Notes in Computer Science})},
  Vol.~\bibinfo{volume}{4963}. \bibinfo{publisher}{Springer},
  \bibinfo{pages}{249--264}.
\newblock


\bibitem[\protect\citeauthoryear{Kovacs}{Kovacs}{2018}]%
        {Kovacspersonal}
\bibfield{author}{\bibinfo{person}{L. Kovacs}.}
  \bibinfo{year}{2018}\natexlab{}.
\newblock \bibinfo{howpublished}{personal communication}.
  (\bibinfo{year}{2018}).
\newblock


\bibitem[\protect\citeauthoryear{Kov{\'{a}}cs and Jebelean}{Kov{\'{a}}cs and
  Jebelean}{2005}]%
        {KovacsJ05}
\bibfield{author}{\bibinfo{person}{L.~Ildik{\'{o}} Kov{\'{a}}cs} {and}
  \bibinfo{person}{T. Jebelean}.} \bibinfo{year}{2005}\natexlab{}.
\newblock \showarticletitle{An Algorithm for Automated Generation of Invariants
  for Loops with Conditionals}. In \bibinfo{booktitle}{{\em Seventh
  International Symposium on Symbolic and Numeric Algorithms for Scientific
  Computing {(SYNASC} 2005), 25-29 September 2005, Timisoara, Romania}}.
  \bibinfo{publisher}{{IEEE} Computer Society}, \bibinfo{pages}{245--249}.
\newblock


\bibitem[\protect\citeauthoryear{Mandel and Simon}{Mandel and Simon}{1977}]%
        {MandelS77}
\bibfield{author}{\bibinfo{person}{A. Mandel} {and} \bibinfo{person}{I.
  Simon}.} \bibinfo{year}{1977}\natexlab{}.
\newblock \showarticletitle{On Finite Semigroups of Matrices}.
\newblock \bibinfo{journal}{{\em Theor. Comput. Sci.\/}} \bibinfo{volume}{5},
  \bibinfo{number}{2} (\bibinfo{year}{1977}), \bibinfo{pages}{101--111}.
\newblock


\bibitem[\protect\citeauthoryear{Markov}{Markov}{1947}]%
        {Mar47}
\bibfield{author}{\bibinfo{person}{A. Markov}.}
  \bibinfo{year}{1947}\natexlab{}.
\newblock \showarticletitle{On certain insoluble problems concerning matrices}.
\newblock \bibinfo{journal}{{\em Doklady Akad. Nauk SSSR\/}}
  \bibinfo{volume}{57}, \bibinfo{number}{6} (\bibinfo{year}{1947}),
  \bibinfo{pages}{539--542}.
\newblock


\bibitem[\protect\citeauthoryear{Masser}{Masser}{1988}]%
        {Mas88}
\bibfield{author}{\bibinfo{person}{D.~W. Masser}.}
  \bibinfo{year}{1988}\natexlab{}.
\newblock \showarticletitle{Linear Relations on Algebraic Groups}. In
  \bibinfo{booktitle}{{\em New Advances in Transcendence Theory}}.
  \bibinfo{publisher}{Cambridge University Press}.
\newblock


\bibitem[\protect\citeauthoryear{Min{\'{e}}}{Min{\'{e}}}{2001}]%
        {Min01}
\bibfield{author}{\bibinfo{person}{A. Min{\'{e}}}.}
  \bibinfo{year}{2001}\natexlab{}.
\newblock \showarticletitle{The Octagon Abstract Domain}. In
  \bibinfo{booktitle}{{\em Proceedings of the Eighth Working Conference on
  Reverse Engineering, WCRE'01, Stuttgart, Germany, October 2-5, 2001}}.
\newblock


\bibitem[\protect\citeauthoryear{Morris}{Morris}{2001}]%
        {Morris01}
\bibfield{author}{\bibinfo{person}{D.~W. Morris}.}
  \bibinfo{year}{2001}\natexlab{}.
\newblock \bibinfo{title}{Introduction to Arithmetic Groups}.
\newblock   (\bibinfo{year}{2001}).
\newblock
\showeprint{math/0106063.}


\bibitem[\protect\citeauthoryear{M{\"{u}}ller{-}Olm and
  Seidl}{M{\"{u}}ller{-}Olm and Seidl}{2004a}]%
        {Muller-OlmS04}
\bibfield{author}{\bibinfo{person}{M. M{\"{u}}ller{-}Olm} {and}
  \bibinfo{person}{H. Seidl}.} \bibinfo{year}{2004}\natexlab{a}.
\newblock \showarticletitle{A Note on Karr's Algorithm}. In
  \bibinfo{booktitle}{{\em Automata, Languages and Programming: 31st
  International Colloquium, {ICALP} 2004, Turku, Finland, July 12-16, 2004.
  Proceedings}} {\em (\bibinfo{series}{Lecture Notes in Computer Science})},
  Vol.~\bibinfo{volume}{3142}. \bibinfo{publisher}{Springer},
  \bibinfo{pages}{1016--1028}.
\newblock


\bibitem[\protect\citeauthoryear{M{\"{u}}ller{-}Olm and
  Seidl}{M{\"{u}}ller{-}Olm and Seidl}{2004b}]%
        {Muller-OlmS04b}
\bibfield{author}{\bibinfo{person}{M. M{\"{u}}ller{-}Olm} {and}
  \bibinfo{person}{H. Seidl}.} \bibinfo{year}{2004}\natexlab{b}.
\newblock \showarticletitle{Precise interprocedural analysis through linear
  algebra}. In \bibinfo{booktitle}{{\em Proceedings of the 31st {ACM}
  {SIGPLAN-SIGACT} Symposium on Principles of Programming Languages, {POPL}
  2004, Venice, Italy, January 14-16, 2004}}. \bibinfo{publisher}{{ACM}},
  \bibinfo{pages}{330--341}.
\newblock


\bibitem[\protect\citeauthoryear{Paterson}{Paterson}{1970}]%
        {Pat70}
\bibfield{author}{\bibinfo{person}{M. Paterson}.}
  \bibinfo{year}{1970}\natexlab{}.
\newblock \showarticletitle{Unsolvability in 3 {\texttimes} 3 matrices}.
\newblock \bibinfo{journal}{{\em Studies in Appl. Math.\/}}
  \bibinfo{volume}{49}, \bibinfo{number}{1} (\bibinfo{year}{1970}),
  \bibinfo{pages}{105--107}.
\newblock


\bibitem[\protect\citeauthoryear{Potapov and Semukhin}{Potapov and
  Semukhin}{2017}]%
        {PS17}
\bibfield{author}{\bibinfo{person}{I. Potapov} {and} \bibinfo{person}{P.
  Semukhin}.} \bibinfo{year}{2017}\natexlab{}.
\newblock \showarticletitle{Decidability of the Membership Problem for 2
  {\texttimes} 2 integer matrices}. In \bibinfo{booktitle}{{\em Proceedings of
  the Twenty-Eighth Annual {ACM-SIAM} Symposium on Discrete Algorithms, {SODA}
  2017, Barcelona, Spain, Hotel Porta Fira, January 16-19}}.
  \bibinfo{pages}{170--186}.
\newblock


\bibitem[\protect\citeauthoryear{Rodr{\'{\i}}guez{-}Carbonell and
  Kapur}{Rodr{\'{\i}}guez{-}Carbonell and Kapur}{2007a}]%
        {Rodriguez-CarbonellK07b}
\bibfield{author}{\bibinfo{person}{E. Rodr{\'{\i}}guez{-}Carbonell} {and}
  \bibinfo{person}{D. Kapur}.} \bibinfo{year}{2007}\natexlab{a}.
\newblock \showarticletitle{Automatic generation of polynomial invariants of
  bounded degree using abstract interpretation}.
\newblock \bibinfo{journal}{{\em Sci. Comput. Program.\/}}
  \bibinfo{volume}{64}, \bibinfo{number}{1} (\bibinfo{year}{2007}),
  \bibinfo{pages}{54--75}.
\newblock


\bibitem[\protect\citeauthoryear{Rodr{\'{\i}}guez{-}Carbonell and
  Kapur}{Rodr{\'{\i}}guez{-}Carbonell and Kapur}{2007b}]%
        {Rodriguez-CarbonellK07}
\bibfield{author}{\bibinfo{person}{E. Rodr{\'{\i}}guez{-}Carbonell} {and}
  \bibinfo{person}{D. Kapur}.} \bibinfo{year}{2007}\natexlab{b}.
\newblock \showarticletitle{Generating all polynomial invariants in simple
  loops}.
\newblock \bibinfo{journal}{{\em J. Symb. Comput.\/}} \bibinfo{volume}{42},
  \bibinfo{number}{4} (\bibinfo{year}{2007}), \bibinfo{pages}{443--476}.
\newblock


\bibitem[\protect\citeauthoryear{Sankaranarayanan, Sipma, and
  Manna}{Sankaranarayanan et~al\mbox{.}}{2004}]%
        {SankaranarayananSM04}
\bibfield{author}{\bibinfo{person}{S. Sankaranarayanan}, \bibinfo{person}{H.
  Sipma}, {and} \bibinfo{person}{Z. Manna}.} \bibinfo{year}{2004}\natexlab{}.
\newblock \showarticletitle{Non-linear loop invariant generation using
  Gr{\"{o}}bner bases}. In \bibinfo{booktitle}{{\em Proceedings of the 31st
  {ACM} {SIGPLAN-SIGACT} Symposium on Principles of Programming Languages,
  {POPL} 2004, Venice, Italy, January 14-16, 2004}}.
  \bibinfo{publisher}{{ACM}}, \bibinfo{pages}{318--329}.
\newblock


\bibitem[\protect\citeauthoryear{Schauenburg}{Schauenburg}{2007}]%
        {Schau07}
\bibfield{author}{\bibinfo{person}{P. Schauenburg}.}
  \bibinfo{year}{2007}\natexlab{}.
\newblock \showarticletitle{A Gr\"{o}bner-based Treatment of Elimination Theory
  for Affine Varieties}.
\newblock \bibinfo{journal}{{\em Journal of Symbolic Computation\/}}
  \bibinfo{number}{9} (\bibinfo{year}{2007}), \bibinfo{pages}{859--870}.
\newblock


\bibitem[\protect\citeauthoryear{Steinberg}{Steinberg}{2016}]%
        {Steinberg16}
\bibfield{author}{\bibinfo{person}{B. Steinberg}.}
  \bibinfo{year}{2016}\natexlab{}.
\newblock \bibinfo{booktitle}{{\em Representation Theory of Finite Monoids}}.
\newblock \bibinfo{publisher}{Springer}.
\newblock


\bibitem[\protect\citeauthoryear{Stillwell}{Stillwell}{2016}]%
        {Sti16}
\bibfield{author}{\bibinfo{person}{J. Stillwell}.}
  \bibinfo{year}{2016}\natexlab{}.
\newblock \bibinfo{booktitle}{{\em Elements of Mathematics: From {E}uclid to
  {G}\"{o}del}}.
\newblock \bibinfo{publisher}{Princeton University Press}.
\newblock


\end{thebibliography}

\end{document}